\tikzset{font={\fontsize{10pt}{12}\selectfont}}
\newcommand\scalemath[2]{\scalebox{#1}{\mbox{\ensuremath{\displaystyle #2}}}}
\tikzset{font={\fontsize{10pt}{12}\selectfont}}
\newtheorem{lemma}{Lemma}
\newtheorem{rema}{Remark}
\newtheorem{prop}{Proposition}
\newcommand{\cmark}{\ding{51}}  
\newcommand{\xmark}{\ding{55}}  
\def\bB{{\bf B}}
\def\bI{{\bf I}}
\def\bA{{\bf A}}
\def\complexC{{\mathbb{C}}}
\def\realR{{\mathbb{R}}}
\def\wi{10}
\def\he{4}
\newcommand*{\rom}[1]{\expandafter\@slowromancap\romannumeral #1@}
\begin{document}
\title{Multi-UAV Enabled Integrated Sensing and Wireless Powered Communication: A Robust Multi-Objective Approach}

\author{Omid Rezaei, Mohammad~Mahdi~Naghsh\IEEEauthorrefmark{1}, \emph{Senior~Member, IEEE}, Seyed Mohammad Karbasi, \emph{Senior~Member, IEEE}, Mohammad Mahdi Nayebi, \emph{Senior~Member, IEEE}, and Saeed~Gazor, \emph{Senior~Member, IEEE}
	
	\thanks{A limited part of this work has been published in IEEE International Conference on Acoustics, Speech and Signal Processing (ICASSP), Rhodes Island, Greece, June 2023 \cite{10094604}.
		
		O. Rezaei, S. M. Karbasi, and M. M. Nayebi are with the Department of Electrical Engineering, Sharif University of Technology, Tehran, 11155-4363, Iran. M. M. Naghsh is with the Department of Electrical and Computer Engineering, Isfahan University of Technology, Isfahan, 84156-83111, Iran. S. Gazor is with the Department of Electrical and Computer Engineering, Queen’s University, Kingston, Ontario, K7L 3N6, Canada. *Please address all the correspondence to M. M. Naghsh, Phone: (+98) 31-33912450; Fax: (+98) 31-33912451; Email: mm\_naghsh@iut.ac.ir.
		
		Authors acknowledge the support of the Natural Sciences and Engineering Research Council of Canada (NSERC).} }

\maketitle

\begin{abstract}
This paper addresses the optimization of integrated sensing and communication (ISAC) systems in unmanned aerial vehicle (UAV)-aided wireless networks featuring wireless power transfer (WPT). We propose a novel architecture wherein multiple UAV-based radars concurrently serve multiple clusters of energy-limited communication users while performing sensing tasks. Initially, radars sense the environment, allowing communication users to harvest and store energy from radar transmissions. Subsequently, this stored energy facilitates uplink communication from nodes to UAVs. Our multi-objective design problem optimizes UAV trajectories, radar transmit waveforms, radar receive filters, time scheduling, and uplink powers to enhance both radar and communication system performance. Incorporating user location uncertainty, we formulate a robust non-convex optimization problem. To address this, we employ an alternating optimization approach, complemented by fractional programming, S-procedure, and majorization-minimization (MM) techniques. Numerical examples illustrate the efficacy of our method across diverse scenarios.
\end{abstract}
\begin{keywords}
	Integrated sensing and communication (ISAC), multi-objective optimization, unmanned aerial vehicle (UAV), wireless powered communication network (WPCN).
\end{keywords}
\section{Introduction}
In recent years, driven by enormous demand for concurrent sensing and communication, there has been significant interest in developing a new paradigm referred to as integrated sensing and communication (ISAC) in both academia and industry \cite{hassanien2019dual}. Note that research related to this topic has also been addressed with different terminologies like joint radar communication (JRC) \cite{ref10}, joint communication and radar sensing (JCAS) \cite{zhang2018multibeam}, dual-functional radar communication (DFRC) \cite{wang2020constrained}, and radar communication (RadCom) \cite{zhang2020joint}. Thanks to its high mobility and flexibility, unmanned aerial vehicle (UAV) has emerged as a key technology in future ISAC systems \cite{meng2022uav}. Precisely, it is expected that UAVs will bring better coverage and improved sensing and communication services in modern ISAC systems \cite{meng2022uav1}.

On the other hand, the integration of UAVs into wireless powered communication networks (WPCNs) has attracted a significant attention \cite{10098733}. UAV-enabled WPCNs leverage the maneuverability of UAVs to serve as aerial access points (AP)s, facilitating wireless charging of low-power internet-of-thing devices while simultaneously collecting information from them \cite{xie2020common}. Unlike conventional WPCNs with fixed APs, UAV-enabled WPCNs offer rapid deployment capabilities, especially in critical situations like natural disasters. Additionally, the controllable mobility of UAVs enables dynamic adjustment of their locations, optimizing trajectories to enhance wireless coverage and improve both wireless information transfer and wireless power transfer efficiency.

This paper presents a novel approach to UAV-aided integrated sensing and wireless powered communication by unifying the concepts of ISAC and WPCN, with particular emphasis on joint waveform design. Specifically, we propose a scheme in which UAV-based radars transmit sensing waveforms during the wireless power transfer (WPT) phase of the WPCN. These waveforms are meticulously designed to reflect the key characteristics of both radar sensing and WPT, enabling joint optimization of essential radar and power transfer parameters. By utilizing backscattered signals, targets can be detected while energy from the transmitted waveforms is harvested by energy-constrained ground users. In the subsequent wireless information transfer (WIT) phase, users upload their information signals to the UAVs. Our approach addresses the limitations of prior works, which typically assume that UAVs employ information-bearing signals for radar probing—an assumption that diminishes radar performance due to the low energy allocated for sensing and the inherent properties of communication signals. By performing sensing during the WPT phase of the WPCN, we aim to enhance the overall efficiency and effectiveness of integrated sensing and communication systems in UAV-aided wireless networks.
\subsection{Abbreviations}
The abbreviations utilized in this paper are presented in Table~\ref{table21}.
\begin{table}
	\centering
	\caption{List of Abbreviations.}\label{table21}
	\begin{tabular}{|c|c|} 
		\hline
		\textbf{Abbreviation}&\textbf{Term} \\ \hline
		AP& access point\\ \hline
		CoMP&  coordinated multi-point\\ \hline
		CSCG&  circularly symmetric complex Gaussian\\ \hline
		D&  dimensional\\ \hline
		DC&direct current\\ \hline
		DFRC&dual-functional radar communication\\
		\hline 
		EH&energy harvesting\\
		\hline 
		FoV&field-of-view\\ \hline
		GPS&global positioning system\\ \hline
		ISAC&integrated sensing and communication\\ \hline
		JCAS&joint communication and radar sensing\\ \hline
		JRC&joint radar communication\\ \hline
		LoS&line-of-sight\\ \hline
		MM&majorization-minimization\\ \hline
		MRT&maximum ratio transmission \\ \hline
		NFZ&no-fly zone\\ \hline
		PMCW&phase modulated continuous
		wave\\ \hline
		RadCom&radar communication\\ \hline
		RF&radio frequency\\ \hline
		SINR&signal-to-interference-plus-noise ratio\\ \hline
		SWIPT&simultaneous wireless information and power transfer\\ \hline
		UAV&unmanned aerial vehicle\\ \hline
		UPA&uniform planar array\\ \hline
		WIT& wireless information transfer\\ \hline
		WPCN&wireless powered communication network\\ \hline
		WPT&wireless power transfer\\ \hline
		w.r.t.&with respect to\\ \hline
	\end{tabular} 
\end{table}
\subsection{Related Works}
\subsubsection{UAV-aided ISAC}
Over the past decade, UAVs have emerged as effective and economical solutions for diverse applications in wireless communication and sensing domains, leveraging their exceptional mobility, flexibility, and rapid deployment capabilities \cite{10246260}. In wireless communication, UAVs have been instrumental in enabling energy-efficient communication \cite{9847346}, facilitating cooperative relaying \cite{7192644}, and supporting EH-based wireless communication \cite{8941314}. Moreover, UAVs have shown promise in enhancing various sensing applications, including weather radar \cite{8818617}, vital sign detection \cite{9478877}, etc.

Furthermore, there are several works in the literature that have considered the UAV-enabled ISAC \cite{lyu2021joint,meng2022throughput,10107972,chen2020performance,ref8,wang2020constrained,ref9,ref12,ref13}. The authors in \cite{lyu2021joint} have considered a sum-rate maximization problem constrained to sensing requirements for given targets in a single UAV-enabled ISAC model. In contrast with the mentioned work that the UAV must provide the communication services and sensing tasks simultaneously, the authors in \cite{meng2022throughput} considered a periodic sensing and communication strategy for their UAV-enabled ISAC model to separate the practical requirements of sensing and communication over time.

It is possible to realize more effective sensing and communication with multi-UAV cooperation compared to a single UAV with limited sensing coverage and communication capability. In a recent work, the authors in \cite{10107972} proposed a UAV-aided ISAC system integrated with mobile edge computing, aiming to minimize system energy and time costs through joint optimization of UAV trajectory, sensing scheduling, and resource allocation. The work in \cite{chen2020performance} proposed the problem of UAV sensing range maximization based on mutual sensing interference and the communication capacity constraints in a cooperative multi-UAV network. In \cite{ref8}, the completion time minimization problem for multi-UAV ISAC systems is studied. By considering a given required localization accuracy for radar
sensing, the authors in \cite{wang2020constrained} studied maximization of both sum and minimum communication rates under the Cramer-Rao bound constraint of the target localization in a multi-UAV ISAC system. In \cite{ref9}, authors proposed a framework based on the extended Kalman filter to track the ground users in a multi-UAV ISAC network. Then, a UAV swarm-enabled ISAC model in \cite{ref12} considered a distributed cooperative framework for multi-target tracking. In \cite{ref13}, the resource allocation problem for a multi-UAV ISAC system is addressed via a method based on reinforcement learning.
\subsubsection{UAV-aided WPCN}
Some previous studies have explored the potential of UAVs as alternatives for traditional fixed APs in WPCN \cite{ 924852310, 948449611, 808617712}. For instance, in \cite{ 924852310}, researchers investigated throughput maximization in a multi-UAV enabled WPCN operating in the millimeter-wave frequency range. Meanwhile, in \cite{948449611}, a different approach was pursued, with one UAV dedicated to WPT and another to WIT. To optimize the trajectories of these UAVs, a multi-agent deep Q-network framework was proposed. Furthermore, \cite{808617712} addressed the resource allocation challenge in a UAV-aided WPCN to facilitate device-to-device communication, employing the Lagrangian relaxation method for the resource allocation algorithm in the mentioned study.
\begin{table*}
	\centering
	\caption{Comparative Summary of Key Features Across Related Works.}\label{tab:comparison}
	\begin{tabular}{|c|c|c|c|c|c|c|c|c|} 
		\hline
	&\textbf{Multi-UAV}&\textbf{Robust}&\textbf{Sensing in}&\textbf{Waveform}&\textbf{Filter} &\textbf{Trajectory}&\textbf{Multi-Objective}& \textbf{Non-linear}\\
		 &\textbf{Scenario}&\textbf{Design}&\textbf{WPT Phase} &\textbf{Design}&\textbf{Design}&\textbf{Optimization}&\textbf{(Radar + Communication)}&\textbf{EH} 
		 	\\ \hline 
		 	\cite{ref9}, \cite{808617712}& \xmark & \xmark & \xmark & \xmark & \xmark & \xmark & \xmark & \xmark 
		 	\\ \hline
		 	 \cite{lyu2021joint,meng2022throughput,10107972}& \xmark & \xmark & \xmark & \xmark & \xmark & \cmark & \xmark & \xmark 
		 	\\ \hline
		 	\cite{chen2020performance}, \cite{ref12}, \cite{ 924852310}& \cmark & \xmark & \xmark & \xmark & \xmark & \xmark & \xmark & \xmark 
		 	\\ \hline
		 	\cite{ref8}, \cite{948449611}& \cmark & \xmark & \xmark & \xmark & \xmark & \cmark & \xmark & \xmark
		 	\\ \hline
		 	\cite{ref13}& \cmark & \xmark & \xmark & \xmark & \xmark & \xmark & \cmark & \xmark
		 	\\ \hline
		 	\cite{10663809}, \cite{10556683}& \xmark & \xmark & \cmark & \xmark & \xmark & \xmark & \xmark & \xmark
		 	\\ \hline
		 	\cite{10382465}, \cite{10539920}& \xmark & \xmark & \cmark & \xmark & \xmark & \xmark & \cmark & \xmark
		 	\\ \hline
		 	\cite{10622740}& \xmark & \xmark & \cmark & \cmark~(Partially) & \xmark & \xmark & \xmark & \cmark
		 	\\ \hline
		 \cite{10094604}& \xmark & \xmark & \cmark & \cmark~(Partially) & \cmark & \cmark & \cmark & \xmark
		\\ \hline
			This paper& \cmark & \cmark & \cmark & \cmark & \cmark & \cmark & \cmark & \cmark \\
			\hline
	\end{tabular} 
\end{table*}
\begin{table*}
	\centering
	\caption{List of Important Variable Notations Used in This Paper.}\label{table2}
	\begin{tabular}{|c|c|c|c|}
		\hline
		\textbf{Variable}&\textbf{Meaning}&\textbf{Variable}&\textbf{Meaning}\\ \hline
		$M$&Number of UAVs/clusters&$p_{k,m}^{ul}[n]$&Uplink transmission power of $\mathrm{U}_{k,m}$\\ \hline
		$K$&Number of ground users&$\widetilde{\mathbf{w}}_m$&Radar receive filter of $\mathrm{UAV}_m$\\ \hline
		$T$&Total time period&$\widetilde{\mathbf{y}}_{m,l} [n]$&Received echo signal at $\mathrm{UAV}_m$\\ \hline
		$N$&Number of discrete time parts&$\bar{\mathbf{y}}_{m,l} [n]$&Output of range processing block at $\mathrm{UAV}_m$\\ \hline
		$\mathrm{U}_{k,m}$&$k$th user in $m$th cluster&$\widetilde{\mathbf{a}}_{m}$&Doppler processing filter\\ \hline
		$\mathrm{UAV}_m$&$m$th UAV&${{\mathrm{SINR}}}_{m,l}\hspace{1pt}[n]$ &Sensing SINR of $\mathrm{UAV}_m$ after range processing\\ \hline
		$\delta_t$&Duration of each time slot&${\widetilde{\mathrm{SINR}}}_{m}\hspace{1pt}[n]$&Sensing SINR of $\mathrm{UAV}_m$ after Doppler processing\\ \hline
		${\mathbf{q}}_m [n]$&3D coordinate of $\mathrm{UAV}_m$&$\hm{\Xi}_{m,l}$&Noise-plus-interference matrix associated with ${{\mathrm{SINR}}}_{m,l}\hspace{1pt}[n]$\\ \hline
		$\widetilde{\mathbf{q}}_m [n]$&2D coordinate of $\mathrm{UAV}_m$&$\widetilde{\hm{\Xi}}_{m}$&Noise-plus-interference matrix associated with ${\widetilde{\mathrm{SINR}}}_{m}\hspace{1pt}[n]$\\ \hline
		$h_{k,i,m} [n]$&Channel power gain from $\mathrm{UAV}_m$ to $\mathrm{U}_{k,i}$&$E_{k,m} [n]$&Linear harvested energy at $\mathrm{U}_{k,i}$\\ \hline
		$\tau_0$&sensing/WPT time fraction&$E^0_{k,m}$&Remaining energy from previous period at $\mathrm{U}_{k,i}$\\ \hline
		$\tau_{k,m}[n]$&WIT time fraction for $\mathrm{U}_{k,m}$&$E^{nl}_{k,m} [n]$&Non-linear harvested energy at $\mathrm{U}_{k,i}$\\ \hline
		$\widetilde{\mathbf{x}}_m$&PMCW radar sequence for $\mathrm{UAV}_m$&$\widetilde{ E}_{k,m} [n]$&A modified version of $E^{nl}_{k,m} [n]$\\ \hline
		$L$&Integration parameter&\scriptsize $\bar{d}_{k,m}$&Radius of the circular uncertainty region for $\mathrm{U}_{k,m}$\\ \hline
		$\widetilde{\tau}_0$& time fraction for each $\widetilde{\mathbf{x}}_m$&$\widetilde{r}_{k,m}$ &Normalized radius of user location uncertainty for $\mathrm{U}_{k,m}$\\ \hline
		$\widetilde{N}$&Number of sub-sequences of each $\widetilde{\mathbf{x}}_m$&$\mathrm{\widetilde{SIR}}_{m,k}$&Radar signal-to-interference ratio for $k$th index at $\mathrm{UAV}_m$\\ \hline
		$p_m^{dl}[n]$&Downlink transmission power of $\mathrm{UAV}_m$&$\mathrm{\widetilde{SNR}}_{m,l}$&Radar signal-to-noise ratio for $l$th sequence at $\mathrm{UAV}_m$\\ \hline
	\end{tabular} 
\end{table*}
\subsection{Contribution}
The aforementioned works on UAV-aided ISAC have assumed that UAVs employ information-bearing signals as the radar probing signals as well; this leads to limitation of radar capabilities due to availability of low sensing energy as well as special characteristics of communication signals. This observation motivates us to combine the concepts of UAV-aided ISAC and WPCN. In the conference version of the current work \cite{10094604}, we firstly propose an approach in which sensing is conducted during the WPT phase, aiming to overcome the limitations observed in prior research  \cite{ahmed2025advancements,chin2024multi,10892027}. In this extended version, we build upon and complete the ideas introduced in the conference version. Specifically, the main contributions of this paper are summarized as follows.

\subsubsection{Proposed model}  In this paper, we consider a multi-UAV enabled integrated sensing and wireless powered communication, where sensing is done in the WPT phase of the communication. Consequently, the radar/WPT waveforms can be designed for sensing purpose with more power leading to radar performance improvement. Precisely, in the first phase of our model, multiple UAV-based PMCW radars transmit sensing waveforms and then, targets (e.g., non-authorized UAVs \cite{guerra2022networks,guvenc2018detection}) can be detected by filtering the backscattered signals. The energy of these sensing waveforms can also be harvested by the energy-limited ground users, utilizing a non-linear EH model. Then, in the WIT phase, the users in each cluster can upload their information signals to their associated UAVs. 
\subsubsection{Joint waveform design}
Designing a waveform that jointly addresses requirements of sensing and WPT is a challenging task. Unlike prior works in the literature--that largely focus on powering users within ISAC frameworks but overlook the critical role of the sensing waveform\footnote{For example, see \cite{10382465,10663809,10556683} for SWIPT and \cite{10539920, 10622740} for WPT-enabled ISAC.}--we delve into a comprehensive approach for joint sensing and WPT waveform design. Specifically, the sensing waveform in our proposed PMCW radar comprises a series of consecutive PMCW sequences. The duration of these PMCW sequences, along with the number of sequence repetitions, is determined by various factors such as range resolution, maximum unambiguous range, radar receiver sampling frequency, minimum required sensing SINR, system complexity constraints, etc. Consequently, the sensing waveform cannot be of arbitrary duration. In contrast, from the perspective of WPT, a continuous waveform duration offers greater flexibility during the design stage. To address this challenge, we have treated the number of sensing sequence repetitions as a design variable with a constraint on its upper and lower bounds, allowing us to simultaneously fulfill the requirements of both sensing and WPT functionalities.
\subsubsection{Problem formulation}
	The aim is to jointly maximize the minimum radar SINR and minimum throughput of communication users by designing the transmit radar/WPT waveforms, radar receive filters, time scheduling as well as uplink power of users, and UAV trajectories under and user location uncertainty. 
\subsubsection{Proposed solution}
	The cast multi-objective robust design problem is non-convex and hence, hard to solve. Therefore, we first adopt the scalarization technique to rewrite the objective as a specific weighted sum of sensing and communication metrics. Then, we devise a method exploiting the concepts of fractional programming, S-procedure, and tricky MM techniques in order to deal with
	the problem efficiently. Our simulation results show the effectiveness of the proposed method.

For completeness, Table~\ref{tab:comparison} provides a comparative summary of key features in recent related works, highlighting the distinctions and advancements introduced in this paper. 	
\subsection{Organization}
The rest of this paper is organized as follows. The signal and system models are explained in Section \ref{sys}. In Section \ref{sum}, a multi-objective problem is formulated, and an optimization framework is proposed for dealing with the design problem. In Section \ref{nonl}, we explore the impact of the non-linear EH circuits on the proposed model. Section \ref{num} presents numerical examples to illustrate the effectiveness of the proposed method. Finally, conclusions are drawn in Section \ref{con}.
\subsection{Notation}
Bold lowercase (uppercase) letters are used for vectors (matrices).
The notations $\Re \{ \cdot \}$, $\mathbb{E} [\cdot ]$, $\vert \cdot \vert$, $\|\cdot\|_2$, ${(\cdot)^{{T}}}$, $(\cdot)^{{H}}$, $(\cdot)^{{*}}$, and $\mbox{tr} \{ \cdot \}$ indicate the real-part, statistical expectation, absolute value, $l_2$-norm of a vector, transpose, Hermitian, complex conjugate, and trace of a matrix, respectively. The symbols $[\mathbf{A}]_{i,j}$ and $[\mathbf{a}]_i$ are used for element-wise representation of the matrix $\mathbf{A}$ and vector $\mathbf{a}$, respectively. The notations $\nabla f(\cdot)$ and $\nabla^2 f(\cdot)$ indicate the gradient and the
Hessian of the twice-differentiable function $f$, respectively. We denote $\mathcal{CN}(\boldsymbol{\omega},\mathbf{\Sigma})$ as a CSCG distribution with mean $\boldsymbol{\omega}$ and covariance $\mathbf{\Sigma}$. The set $\mathbb{R}$ represents real numbers and $\mathbb{R}^{N}$ and $\complexC^{N}$ are the set of ${N \times 1}$ real and complex vectors, respectively. The set of ${N \times N}$ Hermitian and identity matrices are denoted by $\mathbb{H}^{N \times N}$ and $\bI_{N}$, respectively. The notation $\bA \succeq \bB$ means that $\bA-\bB$ is positive semi-definite. 
\section{System Model} \label{sys}
In this section, we present the signal and system models. We include a list of important variable notations in Table~\ref{table2} for improved readability and consistency. As shown in Fig.~\ref{ht}, we consider a multi-UAV enabled integrated sensing and wireless powered communication network where $M$ UAVs are employed to serve $KM$ single-antenna ground users denoted by $\mathrm{U}_{k,m}, \hspace{1pt}1\leq k \leq K,\hspace{1pt}1\leq m \leq M, $ in $M$ clusters and also act as surveillance radars. The number of clusters $M$ and the dimensions of each cluster are contingent upon various factors including user dispersion, UAV altitude during flight, limitations on flight duration, etc. First, the UAVs transmit energy to users and also perform radar sensing. Then, users in each cluster transmit their information signals to their associated UAVs. Each user has an EH circuit and can store energy for its operation. 
In this paper, we assume that UAVs are equipped with directional single-antenna with half power beam-width of $\zeta$ (in degree) for radar/communication transmitters as well as communication receivers; the radar receivers can have several antennas (i.e., an UPA of antennas) to estimate target direction in addition to its range and Doppler. The
UAVs are assumed to fly at the time-varying coordinate $\mathbf{q}_m (t) = [x_m(t), y_m(t), z_m (t)]^T \in \realR^{3},$ at
time interval $0 \leq t \leq T$. The period $T$ is discretized into $N$ equal time slots where the elemental slot
length $\delta_t = T/N$ is chosen to be sufficiently small such that the location of UAVs are considered as approximately unchanged within each time slot $\delta_t$. Note that given the period $T$, the value of $N$ should exceed a lower bound to ensure the desired level of approximation accuracy while remaining below an upper bound to maintain system simplicity \cite{ 8247211}. Therefore, the value of  $N$ can be appropriately selected to strike a balance between accuracy and complexity, ensuring the rationality of $\delta_t$. As a result, the trajectory of $m$th UAV, denoted by $\mathrm{UAV}_m$, can be approximated by
the sequence $\mathbf{q}_m[n] = [x_m[n], y_m[n], z_m[n]]^T,\hspace{1pt} 1 \leq n \leq N$. The user location information at UAVs, provided e.g., by GPS, may be also imperfect due to radio signal interference \cite{cui2018robust}. Thus, in this paper, we take into account the user location uncertainty for robust resource allocation.
\begin{figure}
	\centering
	\begin{tikzpicture}[even odd rule,rounded corners=2pt,x=12pt,y=12pt,scale=.55,every node/.style={scale=.7}]
		\node[inner sep=0pt] (russell) at (2-14-5,9.8)
		{\includegraphics[width=.1\textwidth]{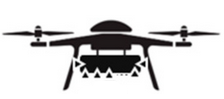}};
		\node [] at (-1-15.5-5,10-.6) {\footnotesize $\hspace{2pt}\mathrm{UAV}_{1}$};
		
		\node [] at (-12+5-2,9-.5) {\LARGE $\hdots$};

		\node[inner sep=0pt] (russell) at (2-9+10,9.8)
		{\includegraphics[width=.1\textwidth]{images2.png}};
		\node [] at (-2+.5,10-.6) {\footnotesize $ \hspace{2pt}\mathrm{UAV}_{M}$};

		\draw [dashed,fill=red!10] (-19.1,-5.4) ellipse(3.2cm and 1.4cm);
		
		\node [] at (-9,-5.4) {\LARGE $\hdots$};
		
		\draw [dashed,fill=red!10] (-19.1+20,-5.4) ellipse(3.2cm and 1.4cm);
		
		\node[inner sep=0pt] (russell) at (-22.5,3)
		{\includegraphics[width=.065\textwidth]{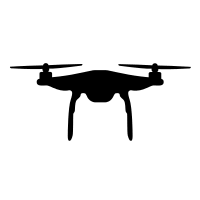}};

		\node [] at (-22.5+3.25,3) {\LARGE $\hdots$};
		
		\node[inner sep=0pt] (russell) at (-22.5+6.5,3)
		{\includegraphics[width=.065\textwidth]{images.png}};

		\node[inner sep=0pt] (russell) at (-11.5+10,4)
		{\includegraphics[width=.065\textwidth]{images.png}};

		\node[inner sep=0pt] (russell) at (-3+10,2)
		{\includegraphics[width=.065\textwidth]{images.png}};

		\draw[thick ,fill=green!10] (-18-6,-6.5) rectangle ++(2.7,2) node[midway]{\footnotesize $\mathrm{U}_{1,1}$};
		\draw[thick] (-17-5+.35-1,-4.5)--+(0,.7);
		\draw[thick,fill=gray!15] (-17-5+.35-1,-3.8)--++(.75,0.5)--++(-1.5,0)--++(.75,-.5)--++(0,-.1);
		
		\draw [rounded corners=0pt,fill=white!10](-19-5-1,-6.4) rectangle ++(.7,1) ;
		\draw [rounded corners=0pt,fill=white!10](-18.8-5-1,-5.4) rectangle ++(.3,.15) ;
		\draw [rounded corners=0pt,fill=red!100](-19-5-1,-6.4) rectangle ++(.7,.3) ;

	\node [] at (-18-1.5,-6) {\LARGE $\hdots$};	
		
		\draw[thick ,fill=green!10] (-9-8,-6.5) rectangle ++(2.7,2) node[midway]{\footnotesize $\mathrm{U}_{K,1}$};
		\draw[thick] (-8-8+.35,-4.5)--+(0,.7);
		\draw[thick,fill=gray!15] (-8-8+.35,-3.8)--++(.75,0.5)--++(-1.5,0)--++(.75,-.5)--++(0,-.1);
		
		\draw [rounded corners=0pt,fill=white!10](-10-8,-6.4) rectangle ++(.7,1) ;
		\draw [rounded corners=0pt,fill=white!10](-9.8-8,-5.4) rectangle ++(.3,.15) ;
		\draw [rounded corners=0pt,fill=green!100](-10-8,-6.4) rectangle ++(.7,.75) ;

		\draw[thick ,fill=green!10] (.5-7+10,-6.5) rectangle ++(2.7,2) node[midway]{\footnotesize $\mathrm{U}_{K,M}$};
		\draw[thick] (1.5-7+.35+10,-4.5)--+(0,.7);
		\draw[thick,fill=gray!15] (1.5-7+.35+10,-3.8)--++(.75,0.5)--++(-1.5,0)--++(.75,-.5)--++(0,-.1);
		
		\draw [rounded corners=0pt,fill=white!10](-.5-7+10,-6.4) rectangle ++(.7,1) ;
		\draw [rounded corners=0pt,fill=white!10](-.3-7+10,-5.4) rectangle ++(.3,.15) ;
		\draw [rounded corners=0pt,fill=yellow!90](-.5-7+10,-6.4) rectangle ++(.7,.55) ;

	\node [] at (.5-7+10-2.5,-6) {\LARGE $\hdots$};

		\draw[thick ,fill=green!10] (.5-7+10-7,-6.5) rectangle ++(2.7,2) node[midway]{\footnotesize $\mathrm{U}_{1,M}$};
	\draw[thick] (1.5-7+.35+10-7,-4.5)--+(0,.7);
	\draw[thick,fill=gray!15] (1.5-7+.35+10-7,-3.8)--++(.75,0.5)--++(-1.5,0)--++(.75,-.5)--++(0,-.1);
	
	\draw [rounded corners=0pt,fill=white!10](-.5-7+10-7,-6.4) rectangle ++(.7,1) ;
	\draw [rounded corners=0pt,fill=white!10](-.3-7+10-7,-5.4) rectangle ++(.3,.15) ;
	\draw [rounded corners=0pt,fill=green!90](-.5-7+10-7,-6.4) rectangle ++(.7,.65) ;

		\draw[->,line width=1,green!100] (-18,8.3)--+(-4,-4);
		\draw[<-,dashed,line width=1,blue!100] (-18.5,8.7)--+(-4,-4);
		
		\draw[<-,line width=1,black!100] (-17.4,8.5)--+(-4.6,-10.7);
		\draw[->,line width=1,green!100] (-16.8,8.3)--+(-4.6,-10.7);

		\draw[<-,line width=1,dashed,blue!100] (-16,8.3)--+(-.3,-4);
		\draw[->,line width=1,green!100] (-15.4,8.1)--+(-.3,-4);
	%
	

		\draw[<-,dashed,line width=1,blue!100] (-8.6+10,8.4)--+(-3.4,-3);
		\draw[->,line width=1,green!100] (-8+10,8.2)--+(-3.4,-3);
		
		\draw[<-,dashed,line width=1,blue!100] (-6.4+10,8.1)--+(2.5,-4.5);
		\draw[->,line width=1,green!100] (-5.8+10,8.5)--+(2.5,-4.5);
		
		\draw[<-,line width=1,black!100] (-7.5+10,8.5)--+(1.2,-9.7);
		\draw[->,line width=1,green!100] (-6.9+10,8.3)--+(1.2,-9.7);


		\draw[fill=black!100] (-7.8+10,3.5) circle (.03 cm);
	\draw[fill=black!100] (-7+10,3.25) circle (.03 cm);
	\draw[fill=black!100] (-6.2+10,3) circle (.03 cm);

		\draw[->,line width=1,green!100] (-14-11.7+4.3+5,-11.6-7+9-.5)--+(4.1,0);
		\draw[->,line width=1,dashed,blue!100] (-18-12+4+5.3,-12.85-7+9-.5)--+(8.35,0);
		\draw[->,line width=1,black!100] (-18-12+4+5,-14.05-7+9-.5)--+(8.7,0);

		\node [] at (-18-12.6+4+5,-11.5-7+9-.5) {\small Radar/WPT waveform};
		\node [] at (-21-11.85+4+5.21,-12.8-7+9-.5) {\small Echo signal};
		\node [] at (-10.6-13.15,-14.1-7+9-.5) {\small Data signal};

	\end{tikzpicture}
	\caption{A multi-UAV enabled integrated sensing and wireless powered communication. The ground users, i.e., $\mathrm{U}_{k,m}$, can be considered as traffic/air quality monitoring or fire detector sensors in smart cities, crop/soil monitoring or livestock tracking sensors in farmlands, etc. }
	\label{ht}
	\centering
\end{figure}
Then, the coordinates of $\mathrm{U}_{k,m}$ are modeled as 
\begin{eqnarray}
x_{k,m}=\bar{x}_{k,m} + \Delta x_{k,m},
~~~~
y_{k,m}=\bar{y}_{k,m} + \Delta y_{k,m},
\end{eqnarray}
respectively, where $\bar{x}_{k,m}$ and $\bar{y}_{k,m}$ are the user location estimates
available at UAVs, and $\Delta x_{k,m}$ as well as $\Delta y_{k,m}$ denote the associated location uncertainties. Furthermore, we assume that UAVs
know their own location perfectly \cite{xie2021uav,xie2018throughput,xu2018uav,cui2018robust}. Then, letting
$
\mathbf{r}_{k,m} = [x_{k,m} , y_{k,m}]^T,
\hspace{2pt}
\bar{\mathbf{r}}_{k,m} = [\bar{x}_{k,m} , \bar{y}_{k,m}]^T$, and $\Delta \mathbf{r}_{k,m} = [\Delta x_{k,m} , \Delta y_{k,m}]^T,
$ the circular uncertainty region can be represented as
$\Delta \mathbf{r}^T_{k,m}  \Delta \mathbf{r}_{k,m} \leq \bar{d}^2_{k,m}$, where $\bar{d}_{k,m}$ denotes the radius of the uncertainty region \cite{boshkovska2017robust}. The distance from $\mathrm{UAV}_m$ to $\mathrm{U}_{k,i}$ in time slot $n$ can be expressed as 
\begin{figure}
	\centering
	\begin{tikzpicture}[even odd rule,rounded corners=2pt,x=12pt,y=12pt,scale=.55,every node/.style={scale=.7}]
		\draw[thick ,fill=white!10,rounded corners=0pt] (\wi+0,\he+0) rectangle ++(6,2) node[midway]{ time slot 1};
		
		\draw[<->,line width=1] (\wi+0,-.6+\he)--+(6,0);
		
		\node [] at (\wi+3,-1.3+\he) {\small $\delta_t$};

		\draw[thick ,fill=white!10,rounded corners=0pt] (\wi+6,0+\he) rectangle ++(6,2) node[midway]{ \dots};

		\draw[thick ,fill=white!10,rounded corners=0pt] (\wi+12,0+\he) rectangle ++(6,2) node[midway]{ time slot n};
		
		\draw[<->,line width=1] (\wi+12,-.6+\he)--+(6,0);
		
		\node [] at (\wi+15,-1.3+\he) {\small $\delta_t$};

		\draw[thick ,fill=white!10,rounded corners=0pt] (\wi+18,0+\he) rectangle ++(6,2) node[midway]{ \dots};

		\draw[thick ,fill=white!10,rounded corners=0pt] (\wi+24,0+\he) rectangle ++(6,2) node[midway]{ time slot N};
		
		\draw[<->,line width=1] (\wi+24,-.6+\he)--+(6,0);
		
		\node [] at (\wi+27,-1.3+\he) {\small $\delta_t$};

		\draw[thick ,fill=green!10,rounded corners=0pt] (\wi+4,-5+\he) rectangle ++(7,2) node[midway]{sensing/WPT};
		
		\draw[<->,line width=1] (\wi+4,-5.6+\he)--+(7,0);
		
		\node [] at (\wi+7.5,-6.4+\he) {\small $\tau_0 \delta_t$};

		%
		
		\node [] at (\wi+7.5,-7.6+\he) {\small $\vdots$};
		
		\draw[thick ,fill=red!10,rounded corners=0pt] (\wi+11,-5+\he) rectangle ++(6,2) node[midway]{$\mathrm{U}_{1,1}$ WIT};
		
		\draw[<->,line width=1] (\wi+11,-5.6+\he)--+(6,0);
		
		\node [] at (\wi+14,-6.4+\he) {\small $\tau_{1,1}[n] \delta_t$};

		%
		
		\node [] at (\wi+14,-7.6+\he) {\small $\vdots$};
		
		\draw[thick ,fill=red!10,rounded corners=0pt] (\wi+17,-5+\he) rectangle ++(3,2) node[midway]{$\hdots$};

		\draw[thick ,fill=red!10,rounded corners=0pt] (\wi+20,-5+\he) rectangle ++(6,2) node[midway]{$\mathrm{U}_{K,1}$ WIT};
		
		\draw[<->,line width=1] (\wi+20,-5.6+\he)--+(6,0);
		
		\node [] at (\wi+23,-6.4+\he) {\small $\tau_{K,1}[n] \delta_t$};

		%
		
		\node [] at (\wi+23,-7.6+\he) {\small $\vdots$};
		
		\draw[-,line width=1] (\wi+12,0+\he)--+(-8,-3);
		\draw[-,line width=1] (\wi+18,0+\he)--+(8,-3);
		
		\draw[-,line width=1] (\wi+26,-3+\he)--+(0,-6);
		\draw[-,line width=1] (\wi+4,-3+\he)--+(0,-6);

		\draw[thick ,fill=green!10,rounded corners=0pt] (\wi+4,-5+\he-6) rectangle ++(7,2) node[midway]{sensing/WPT};
		
		\draw[<->,line width=1] (\wi+4,-5.6+\he-6)--+(7,0);
		
		\node [] at (\wi+7.5,-6.4+\he-6) {\small $\tau_0 \hspace{1pt}\delta_t$};
		
		\draw[thick ,fill=red!10,rounded corners=0pt] (\wi+11,-5+\he-6) rectangle ++(6,2) node[midway]{$\mathrm{U}_{1,M}$ WIT};
		
		\draw[<->,line width=1] (\wi+11,-5.6+\he-6)--+(6,0);
		
		\node [] at (\wi+14,-6.4+\he-6) {\small $\tau_{1,M}[n] \delta_t$};
		
		\draw[thick ,fill=red!10,rounded corners=0pt] (\wi+17,-5+\he-6) rectangle ++(3,2) node[midway]{$\hdots$};
		
		\draw[thick ,fill=red!10,rounded corners=0pt] (\wi+20,-5+\he-6) rectangle ++(6,2) node[midway]{$\mathrm{U}_{K,M}$ WIT};
		
		\draw[<->,line width=1] (\wi+20,-5.6+\he-6)--+(6,0);
		
		\node [] at (\wi+23,-6.4+\he-6) {\small $\tau_{K,M}[n] \delta_t$};

		\draw[-,thick] (\wi+0,0+\he)--+(0,-3);
		\draw[-,thick] (\wi+30,0+\he)--+(0,-3);
		
		\node [] at (\wi+0,-3.5+\he) {\small $t=0$};
		\node [] at (\wi+30,-3.5+\he) {\small $t=T$};
		
	\end{tikzpicture}
	\caption{Protocol structure for multi-UAV enabled integrated sensing and wireless powered communication.}
	\label{ht1}
	\centering
\end{figure}
\begin{equation}
d_{k,i,m} [n]= \sqrt{ {\parallel \widetilde{\mathbf{q}}_m [n] - \mathbf{r}_{k,i} \parallel }^2_2 + z_m^2 [n]},
\end{equation}
where $\widetilde{\mathbf{q}}_m [n]= [x_m[n], y_m[n]]^T $. Theoretical investigations and empirical studies indicate that areas with ample UAV coverage and minimal ground obstructions, such as rural regions, exhibit a notably high probability of LoS links \cite{7108163}. Assuming the UAVs in our model operate at moderately high altitudes, transmission between UAVs and ground users encounters minimal obstruction or reflection. Therefore, we assume that the communication links from
UAVs to the ground users are dominated by the LoS links \cite{xie2020common,xie2018throughput,8588990}. The channel power gain from $\mathrm{UAV}_m$ to $\mathrm{U}_{k,i}$ during
slot $n$ follows the free-space path loss model which can be written as $h_{k,i,m} [n]=\rho_0 d^{-2}_{k,i,m} [n]$, where $\rho_0$ denotes the channel power at the reference distance $d_0 = 1$ m \cite{xie2020common,xie2018throughput,8588990}. 

Fig.~\ref{ht1} shows the timing protocol of the proposed method. Followed by harvest-then-transmit protocol, each time slot of width $\delta_t$ is divided into $K+1$ subslots where the first subslot with duration $\tau_0 \delta_t$ is used for sensing/WPT and the other subslots of duration $\tau_{k,m} [n] \delta_t$ are allocated for uplink transmission from users to UAVs. 
\subsection{Downlink: Sensing/WPT Phase}
\subsubsection{Sensing perspective}\label{mmnh} The sensing interval consists of $L$ successive transmission of PMCW radar sequences $\widetilde{\mathbf{x}}_m \in \complexC^{\widetilde{N}},~1\leq m \leq M,$ with total transmission power of $p_m^{dl}[n]$. More precisely, each sequence $\widetilde{\mathbf{x}}_m$ comprises $\widetilde{N}$ sinusoidal signals, each characterized by a distinct initial phase. Note that we consider $\widetilde{N}$ as a fixed parameter  determined as per radar range resolution and other practical considerations. The $l$th received signal of a target associated with a given $\widetilde{\mathbf{x}}_m$ in $\mathrm{UAV}_m$ at the cell under test can be modeled as \cite{6650043}
\begin{align}  \label{en1}
	\widetilde{\mathbf{y}}_{m,l} [n]=& \alpha_{m} [n] \widetilde{\mathbf{x}}_m  +\sum_{k=-\widetilde{N}+1 ,k\neq 0}^{\widetilde{N}-1} \widetilde{\alpha}_{m,k} [n] \mathbf{J}_k  \widetilde{\mathbf{x}}_m \\ \nonumber&+\mathbf{n}_{m,l} [n],~ \forall n,m, 1 \leq l \leq L,
\end{align}
where $\alpha_{m} [n]$ and $\widetilde{\alpha}_{m,k} [n]$ are the complex parameter corresponding to the propagation and backscattering effects from the desired and interfering targets, respectively, $\mathbf{n}_{m,l} [n]\sim \mathcal{CN} \left(\mathbf{0},{{\sigma}_{m,l}^2}{\mathbf{I}}_{\widetilde{N}} \right)$ is a noise vector, and $\mathbf{J}_k$ denotes the periodic shift matrix 
\begin{equation}
	\mathbf{J}_k=\begin{bmatrix}
		\mathbf{0}_{(\widetilde{N}-k)\times k} & \mathbf{I}_{\widetilde{N}-k} \\ \mathbf{I}_{k} & \mathbf{0}_{k \times (\widetilde{N}-k)}
	\end{bmatrix}, k\geq 1, ~ \mathbf{J}_{-k}=\mathbf{J}^T_k.
\end{equation}
Note that we do not consider the effect of Doppler shift for intra-pulse code $\widetilde{\mathbf{x}}_m$. That is, this effect is incorporated in complex coefficients  $\alpha_{m} [n]$ and $\widetilde{\alpha}_{m,k} [n]$ as a constant phase. However, the inter-pulse Doppler shift is non-negligible and will be taken into account in the following. Besides, we ignore the interference from other UAV radars, i.e., $\mathrm{UAV}_i,~ \forall i \neq m$, due to their weak echo signals. This assertion is supported by two key factors: i) each radar antenna's FoV is tailored to cover only its respective cluster, limiting interference signals from other clusters to enter the radar receiver through its sidelobes, resulting in significant attenuation in the range of approximately 30 dB, as demonstrated in \cite{ 7858652 }; ii) the potential to employ pseudo-orthogonal code design and leverage different frequencies for radar transmit waveforms for different UAVs facilitates an approximately 20 dB attenuation of interferences, as illustrated in \cite{ RAEI2023108914 }. Then, the received signal $\widetilde{\mathbf{y}}_{m,l} [n]$ is processed via $\widetilde{\mathbf{w}}_m \in \complexC^{\widetilde{N}}$, viz. range processing. The time delay at which each receive filter output signal (i.e., the correlation between $\widetilde{\mathbf{w}}_m$ and $\widetilde{\mathbf{y}}_m$) has its maximum value can be used to estimate the target range. As shown in Fig.~\ref{hh1}, following the range processing, Doppler processing is applied on each range-cell to obtain target speed via frequency analysis (usually implemented by FFT) of $L$ samples associated with the range cell. Therefore, in practice, it is better to choose $L$ as $2^{i}, \hspace{1pt}lb \leq i \leq ub$, where $lb$ and $ub$ are respectively determined according to the minimum required sensing SINR and the allowed system complexity. As illustrated in Fig.~\ref{ttt}, the sensing/WPT subslot duration can be obtained as $\tau_0 \delta_t=L \widetilde{\tau}_0\delta_t$ where $\widetilde{\tau}_0$ is a fix parameter that can be determined by $\widetilde{N}$ and the sampling frequency of radars. Using \eqref{en1}, the sensing SINR of  $\mathrm{UAV}_m$ after range processing block, i.e., the signal $\bar{y}_{m,l}[n]={\widetilde{\mathbf{w}}}^H_m \widetilde{\mathbf{y}}_{m,l}[n]$ can be written as
\begin{figure}
	\centering
	\begin{tikzpicture}[even odd rule,rounded corners=2pt,x=12pt,y=12pt,scale=.55,every node/.style={scale=.7}]
		
		\draw[->,line width=1] (-1.5,0)--+(4,0);
		
		\node [] at (2-1.5,1) { $\widetilde{\mathbf{y}}_{m,L}[n]$};

		\draw[thick ,fill=white!10,rounded corners=0pt] (4-1.5,-1) rectangle ++(5.5,2) node[midway]{ $\widetilde{\mathbf{w}}_{m} \in \complexC^{\widetilde{N}}$};

		\draw[->,line width=1] (8,0)--+(10,0);
		
		\node [] at (13,1) { $\bar{{y}}_{m,L}[n],~\mathrm{SINR}_{m,L}[n]$};

		\node [] at (6-1.5,4) {\huge $\vdots$};

		\draw[->,line width=1] (0-1.5,0+8)--+(4,0);
		
		\node [] at (2-1.5,1+8) { $\widetilde{\mathbf{y}}_{m,1}[n]$};

		\draw[thick ,fill=white!10,rounded corners=0pt] (4-1.5,-1+8) rectangle ++(5.5,2) node[midway]{ $\widetilde{\mathbf{w}}_{m} \in \complexC^{\widetilde{N}}$};

		\draw[->,line width=1] (8,0+8)--+(10,0);
		
		\node [] at (13,1+8) { $\bar{{y}}_{m,1}[n],~\mathrm{SINR}_{m,1}[n]$};

		\draw[thick ,fill=white!10,rounded corners=0pt] (18,-1) rectangle ++(3.25,10) node[midway]{ Buffer};
		
		\draw[->,line width=1] (21.25,4)--+(2,0);
		
		%
		%

		\draw[thick ,fill=white!10,rounded  corners=0pt] (21+2.25,2.75) rectangle ++(4.5,2.5) node[midway]{ $\widetilde{\mathbf{a}}_m \in \complexC^{L}$};
		
		\draw[->,line width=1] (25+2.75,0+4)--+(5.5,0);
		
		\node [] at (27.75+2.75,5) { $\widetilde{\mathrm{SINR}}_{m}[n]$};

		\node [] at (23+2.25,-2.5) {Doppler processing unit};
		
		\node [] at (23-17-1.5,-2.5) {range processing unit};

	\end{tikzpicture}
	\caption{The block diagram for the radar receiver of the $\mathrm{UAV}_m$ considering integration over $L$ received signal.}
	\label{hh1}
	\centering
\end{figure}
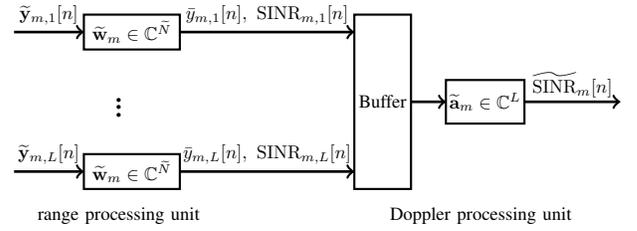
\begin{figure} 
	\centering
	\includegraphics[scale=.265]{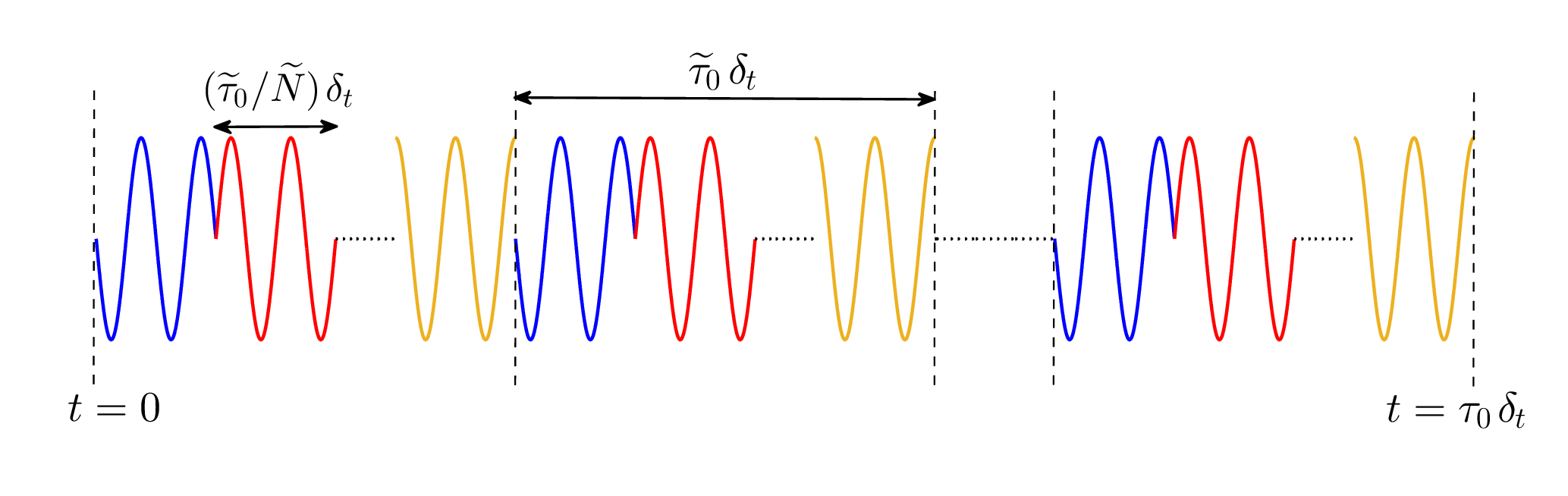}
	\caption{ Illustration of sensing/WPT waveform for the pulse-modulated PMCW radar. The total sensing/WPT duration consists of $L$ successive transmission of radar sequence each with duration $\widetilde{\tau}_0 \delta_t$. Each radar sequence itself consists of $\widetilde{N}$ sub-sequences with duration $(\widetilde{\tau}_0/\widetilde{N})\delta_t$ which are shown by different colors. Specifically, the signals shown in different colors are sinusoidal signals, each with a distinct initial phase. }
	\label{ttt}
	\centering
\end{figure}
\begin{equation} \label{kkl}
	{{\mathrm{SINR}}}_{m,l}\hspace{1pt}[n]= \frac{  {\vert \alpha_{m}[n] \vert}^2  {\vert \widetilde{\mathbf{w}}_m^H \widetilde{\mathbf{x}}_m \vert}^2} {\widetilde{\mathbf{w}}_m^H \hm{\Xi}_{m,l} \widetilde{\mathbf{w}}_m},
\end{equation}
where
\begin{align} \label{kkl21}
	\hm{\Xi}_{m,l}=& {{\sigma}_{m,l}^2}{\mathbf{I}}_{\widetilde{N}} + \sum_{k=-\widetilde{N}+1 ,k\neq 0}^{\widetilde{N}-1} \widetilde{\sigma}_{m,k}^2 \mathbf{J}_k \widetilde{\mathbf{x}}_m \widetilde{\mathbf{x}}_m^H  \mathbf{J}^H_k,
\end{align}
is a positive definite matrix and $\widetilde{\sigma}_{m,k}^2=\mathbb{E} [\vert \widetilde{\alpha}_{m,k} [n] \vert^2]$. Then, assuming a-priori known target Doppler shift\footnote{Several techniques assume a-priori known Doppler frequency (see e.g., \cite{de2011design,naghsh2013unified}); however, in practice, the Doppler shift can be estimated at the receiver, e.g., via a bank of filters matched to different Doppler frequencies \cite{stoica2008transmit}.}, the total sensing SINR of $\mathrm{UAV}_m$ after Doppler processing block can be obtained as
\begin{equation} \label{kkl0}
	\widetilde{{\mathrm{SINR}}}_{m}\hspace{1pt}[n]= \frac{ \beta_m {\vert \alpha_{m}[n] \vert}^2  {\vert \widetilde{\mathbf{w}}_m^H \widetilde{\mathbf{x}}_m \vert}^2} {\widetilde{\mathbf{w}}_m^H \widetilde{\hm{\Xi}}_{m} \widetilde{\mathbf{w}}_m},
\end{equation}
where
\begin{equation}
	\beta_m=L+ \sum_{i=1}^{L}\sum_{j=1, j \neq i}^{L}
	[\widetilde{\mathbf{a}}_m^{*}]_i \hspace{1pt} [\widetilde{\mathbf{a}}_m]_j,
\end{equation}
with $\widetilde{\mathbf{a}}_m \in \complexC^{L}$ is the Doppler processing filter, and 
\begin{align} \label{kkl210}
	\widetilde{\hm{\Xi}}_{m}=& \sum_{l=1}^{L} {{\sigma}_{m,l}^2}{\mathbf{I}}_{\widetilde{N}} +\beta_m \sum_{k=-\widetilde{N}+1 ,k\neq 0}^{\widetilde{N}-1} \widetilde{\sigma}_{m,k}^2 \mathbf{J}_k \widetilde{\mathbf{x}}_m \widetilde{\mathbf{x}}_m^H  \mathbf{J}^H_k,
\end{align}
is a positive definite matrix. For instance, the filter $\widetilde{\mathbf{a}}_m$ can be obtained as $\widetilde{\mathbf{a}}_m=[1, e^{-j\nu_m},...,$ $e^{-j{\nu_m}(L-1)}]^T$ as a simple FFT-based Doppler processing filter with $\nu_m$ being the normalized Doppler frequency between the $\mathrm{UAV}_m$ and a possible target.

As mentioned, parameter  $\alpha_{m}[n]$ encapsulates information regarding target propagation and backscattering effects. Swerling models offer statistical characterizations of target behavior, ranging from Swerling 0 to Swerling \rom{4}. Swerling 0 denotes a non-fluctuating target model, while Swerling \rom{1} through Swerling \rom{4} represent various fluctuating models. Swerling \rom{1} applies to scenarios where the target's velocity is relatively low compared to the observation time, effectively rendering it stationary. Given our focus on low-speed malicious UAVs as target entities, Swerling \rom{1} serves as an appropriate assumption within our model. Therefore, $\widetilde{{\mathrm{SINR}}}_{m}\hspace{1pt}[n]$ is assumed to be constant over time slots, i.e., $\widetilde{{\mathrm{SINR}}}_{m}\hspace{1pt}[n]=\widetilde{{\mathrm{SINR}}}_{m}\hspace{1pt}$.
\subsubsection{WPT perspective}
The harvested energy at $\mathrm{U}_{k,m}$ for linear EH model can be expressed as\footnote{In practice, the minimum required energy for typical EH circuits can be in the range of $1\sim10~\mu$W \cite{clerckx2018fundamentals}.}
\begin{equation} \label{eee}
E_{k,m} [n]=\tau_0 \delta_t \epsilon_{k,m} \sum_{i=1}^{M} h_{k,m,i} [n] p_i^{dl}[n],
\end{equation}
where $\epsilon_{k,m}$ denotes the energy conversion efficiency. A linear EH circuit is considered here to make the model more tractable. We explore the concept of non-linear EH in Subsection~\ref{nonl}.
\subsection{Uplink: WIT Phase}
In the uplink phase, the transmit power of $\mathrm{U}_{k,m}$ in time slot $n$ is denoted by $p^{{ul}}_{k,m} [n]$. The following EH constraint for $\mathrm{U}_{k,m}$ at time slot $n$ should be satisfied: 
\begin{align} \label{key}
	& \scalemath{.97}{\tau_{k,m}[n] \delta_t  p^{{ul}}_{k,m} [n]  \leq \sum^{n}_{j=1} E_{k,m} [j] + E^0_{k,m} - \sum^{n-1}_{j=1} \tau_{k,m}[j] \delta_t  p^{{ul}}_{k,m} [j],}
\end{align}
where $ E^0_{k,m}$ is the remaining
energy for $\mathrm{U}_{k,m}$ from previous time periods. It is assumed that $E^0_{k,m}$ is associated with sensing-only mode where the system works before employing the proposed protocol in $[0 , T]$. Moreover, some parts of $E^0_{k,m}$ can be obtained by solar energy where $\mathrm{U}_{k,m}$ has a hybrid solar-RF energy collector circuit \cite{tran2022hybrid}. Note that $E^0_{k,m}$ guarantees the reliable energy for uninterrupted communication in the WIT phase. Then, the achievable throughput of $\mathrm{U}_{k,m}$ in time slot $n$ is given by \cite{xie2020common}
\begin{equation}  \label{sinr}
	R_{k,m} [n]=\tau_{k,m} [n] \delta_t \textrm{log}_2 \left(1+ \frac{p^{{ul}}_{k,m} [n] h_{k,m,m}[n]}{ {\sigma_{c,m}^2}} \right),
\end{equation}
where ${\sigma_{c,m}^2}$ is the power of the additive white Gaussian noise at the  communication receiver of $\mathrm{UAV}_{m}$. Notice that UAVs employ CoMP reception techniques \cite{xie2020common,liu2019comp} and as a result, the inter-cluster interference is removed. Thus, the average throughput of $\mathrm{U}_{k,m}$ over $N$ time slots is given by $R_{k,m}=\frac{1}{N} \sum^{N}_{n=1} R_{k,m} [n]$.	
\section{Problem Formulation And The Proposed Method}\label{sum}
	In this section, we cast the optimization problem in which we aim to jointly maximize the minimum radar SINR, i.e., $\min_m \widetilde{\mathrm{SINR}}_m$, and minimum throughput of communication users, i.e., $\min_{k,i} R_{k,i}$, in our model by designing the radar/WPT waveforms $\widetilde{\mathbf{X}}=\lbrace \widetilde{\mathbf{x}}_m ,\hspace{2pt} \forall m \rbrace$, radar receive filters $\widetilde{\mathbf{W}}= \lbrace \widetilde{\mathbf{w}}_m ,\hspace{2pt} \forall m \rbrace$, time scheduling parameters $\mathbf{A}= \lbrace \tau_{k,m} [n],\hspace{2pt} \forall k,m,n \rbrace$, integration parameter $L$, uplink power of users $\mathbf{P}^{ul}= \lbrace  p^{ul}_{k,m} [n],\hspace{2pt} \forall k,m,n \rbrace$, and UAV trajectories $\mathbf{Q}= \lbrace \mathbf{q}_m [n],\hspace{2pt} \forall m,n \rbrace$. To find the Pareto-optimal solutions of the mentioned multi-objective problem, we adopt the scalarization technique \cite{boyd2004convex} using the Pareto weight $\mu \in [0 ,1]$ as follows
\begin{align}\label{maxmin1}
	&\hspace{-10pt} \max_{\widetilde{\mathbf{X}}, \widetilde{\mathbf{W}},{\mathbf{A}}, L, \mathbf{P}^{ul}, \mathbf{Q} }  ~ (1-\mu) \displaystyle \min_{\substack{{1 \leq k \leq  {K}}\\ {1 \leq i \leq  {M}}}} ~ \min_{  \Delta \mathbf{r}^T_{k,i}  \Delta \mathbf{r}_{k,i} \leq \bar{d}^2_{k,i} } R_{k,i} \\ \nonumber  & \hspace{55pt}+\mu \displaystyle \min_{1 \leq m \leq  {M}  } \hspace{1pt} \widetilde{\mathrm{SINR}}_m
	\\ \nonumber 
	&\hspace{-7pt} \mbox{s.t.}\\ \nonumber  & \scalemath{.97}{\textrm{C}_{1}:0 \leq \tau_{k,m} [n] \leq 1-L\widetilde{\tau}_0, \hspace{3pt}  \sum_{k=1}^{K} \tau_{k,m} [n] \leq 1- L\widetilde{\tau}_0, \hspace{2pt} \forall k,m,n,} \\ \nonumber & 
	\scalemath{.97}{\textrm{C}_{2}: {\parallel \mathbf{q}_m [n] - \mathbf{q}_m [n-1] \parallel}_2 \leq \delta_t  {v}_{\textrm{max}}, \hspace{2pt} z_{\textrm{min}} \leq z_m [n] \leq z_{\textrm{max}} , } \\ \nonumber & \hspace{14pt} \scalemath{.95}{\frac{1}{N} \sum_{n=1}^{N} z_m [n] \geq z^{\mathrm{tr}}_m, \hspace{2pt} \widetilde{\mathbf{q}}_m [n] \in \mathrm{CR}_m
	, \hspace{2pt} \mathbf{q}_m [0] = \mathbf{q}_m[N] ,\hspace{2pt} \forall m,n,}
	\\ \nonumber & 
	\textrm{C}_{3}: {\parallel \widetilde{\mathbf{q}}_m [n] - \mathbf{q}^{\mathrm{NFZ}}_{m,j} \parallel}^2_2 \geq \left(r^{\mathrm{NFZ}}_{m,j}\right)^2, \hspace{2pt} \forall m,n, \hspace{2pt} 1 \leq j \leq N_m^{\mathrm{NFZ}} , \\ \nonumber & \hspace{19pt}
	{\parallel \widetilde{\mathbf{q}}_m [n] - \widetilde{\mathbf{q}}_{m^{\prime}} [n] \parallel}^2_2 \geq d_{\textrm{min}}^2, \hspace{2pt} \forall m \neq m^{\prime},n,
	\\ \nonumber & 
	\scalemath{.97}{\textrm{C}_{4}:L \in \lbrace 2^{lb}, 2^{lb+1}, ..., 2^{ub} \rbrace,~ \textrm{C}_{5}: \vert \widetilde{{x}}_m(i) \vert^2 = p_m^{dl} [n], \hspace{2pt} \forall m,i,n,}  \\ \nonumber &   
	\textrm{C}_{6}: \sum^{n}_{j=1}\tau_{k,m}[j] \delta_t  p^{{ul}}_{k,m} [j]  \leq   \min_{  \Delta \mathbf{r}^T_{k,m}  \Delta \mathbf{r}_{k,m} \leq \bar{d}^2_{k,m} } \sum^{n}_{j=1} E_{k,m} [j]  \\ \nonumber & \hspace{15pt} + E^0_{k,m}, \hspace{2pt} \forall k,m,n,
\end{align}
where ${v}_{\textrm{max}}$ is the maximum speed of the UAVs and $\mathrm{CR}_m$ indicates the $m$th cluster region. Note that the design variables $\widetilde{\mathbf{X}}$, $L$, and  $\mathbf{Q}$ are the joint parameters between sensing and communication tasks. The constraint $\frac{1}{N}\sum_{n=1}^{N} z_m [n] \geq z^{\mathrm{tr}}_m$ in $\textrm{C}_2$ is considered to bring the desired coverage for a given radar FoV during the period of $T$ seconds. Precisely, $z^{\mathrm{tr}}_m$ can be determined numerically in such a way that the possible targets in $\mathrm{CR}_m$ are seen by the $\mathrm{UAV}_m$. The first part of the constraint $\textrm{C}_{3}$ introduces $N_m^{\mathrm{NFZ}}$ cylindrical NFZ areas in the $m$th cluster\footnote{NFZs are considered due to security, privacy, or safety reasons \cite{valavanis2015handbook,li2018joint}. Note that the NFZs can also be modeled as a polygonal \cite{lee2020uav}.} with coordinate center $\mathbf{q}^{\mathrm{NFZ}}_{m,j}$ and radius $r^{\mathrm{NFZ}}_{m,j}$. The second part of constraint $\textrm{C}_{3}$ is incorporated to ensure collision avoidance among UAVs, where $d_{\textrm{min}}$ represents the minimum allowable separation distance between them. Although constraint $\textrm{C}_{2}$ restricts each UAV's trajectory to remain within its designated cluster region, denoted by $\mathrm{CR}_m$, the additional condition in $\textrm{C}_{3}$ is essential for preventing collisions, particularly near the boundaries of adjacent clusters. Also, $\textrm{C}_{5}$ represents the unimodularity constraint of the transmit sequence.

It can be seen that the objective function and constraints $\textrm{C}_3-\textrm{C}_6$ are non-convex and so the problem. Therefore, in the following, we devise a method based on alternating optimization to deal with the non-convex design problem. Tackling subproblems corresponding to aforementioned alternating optimization is associated with novel tricks; e.g., by employing fractional programming, S-procedure, and MM to be discussed shortly.
\subsection{Maximization over $\widetilde{\mathbf{W}}$ for fixed $[\widetilde{\mathbf{X}}, \mathbf{A}, L, \mathbf{P}^{ul},\mathbf{Q}]$}\label{w}
We first consider the problem in \eqref{maxmin1} w.r.t. $\widetilde{\mathbf{W}}$ which is a unconstrained problem. Using the Cauchy-Schwartz inequality, we can write the following expression for $\widetilde{\mathrm{SINR}}_m$ in the objective function 
\begin{equation*} \scalemath{.93}{
	{\vert \widetilde{\mathbf{w}}_m^H \widetilde{\mathbf{x}}_m \vert}^2= {\vert \widetilde{\mathbf{w}}_m^H \widetilde{{\hm{\Xi}}}_m^{1/2} \widetilde{{\hm{\Xi}}}_m^{-1/2} \widetilde{\mathbf{x}}_m \vert}^2 \leq \left( \widetilde{\mathbf{w}}_m^H \widetilde{{\hm{\Xi}}}_m \widetilde{\mathbf{w}}_m \right) \left( \widetilde{\mathbf{x}}_m^H \widetilde{{\hm{\Xi}}}_m^{-1} \widetilde{\mathbf{x}}_m \right),}
\end{equation*}
where the equality holds for
\begin{equation} \label{jjk}
\widetilde{\mathbf{w}}_m={\widetilde{\hm{\Xi}}}_m^{-1} \widetilde{\mathbf{x}}_m,
\end{equation}
(by neglecting a multiplicative constant) which yields a closed-form solution for $\widetilde{\mathbf{w}}_m$. 
\subsection{Maximization over $[\widetilde{\mathbf{X}}, \mathbf{A}, \mathbf{P}^{ul},\mathbf{Q}]$ for fixed $[\widetilde{\mathbf{W}}, L]$}\label{joint}
Then, we consider the problem in \eqref{maxmin1} w.r.t. $[\widetilde{\mathbf{X}}, \mathbf{A}, \mathbf{P}^{ul},\mathbf{Q}]$ leading to the associated subproblem. The objective function and the constraints $\textrm{C}_3$, $\textrm{C}_5$, and $\textrm{C}_6$ are non-convex. Fig.~\ref{hh11} summarizes the procedure for dealing with this subproblem. First, we aim to deal with the sensing SINR in the objective function which is a non-convex fractional term. We rewrite the second term in the objective function as $f_1(\widetilde{\mathbf{x}}_m)= g_1 (\widetilde{\mathbf{x}}_m)/g_2 (\widetilde{\mathbf{x}}_m)$ for all $m$ where 
$
g_1(\widetilde{\mathbf{x}}_m)= \widetilde{\mathbf{x}}_m^H \hm{{\Gamma}}_m \widetilde{\mathbf{x}}_m
$
and 
$
g_2(\widetilde{\mathbf{x}}_m)=  \widetilde{\mathbf{x}}_m^H \widetilde{\hm{{\Gamma}}}_m \widetilde{\mathbf{x}}_m  +\gamma_m
$
with 
\begin{equation}
	\hm{\Gamma}_m=\mu \beta_m \vert \alpha_m \vert^2 \widetilde{\mathbf{w}}_m \widetilde{\mathbf{w}}_m^H,
\end{equation}
\begin{equation}
	\widetilde{\hm{\Gamma}}_m= \beta_m \sum_{k=-\widetilde{N}+1 ,k\neq 0}^{\widetilde{N}-1} \widetilde{{\sigma}}_{m,k}^2 \mathbf{J}^H_k \widetilde{\mathbf{w}}_m \widetilde{\mathbf{w}}_m^H \mathbf{J}_k,
\end{equation}
and 
$
\gamma_m=\sum_{l=1}^{L} {\sigma_{m,l}^2} \hspace{2pt} \widetilde{\mathbf{w}}_m^H \widetilde{\mathbf{w}}_m.
$

\begin{prop} \label{cor1}
Let the objective function of the subproblem in \ref{joint} be as follows
\begin{equation}\label{mmb}
	q(\mathbf{\widetilde{X}})= a + \displaystyle \mu \min_{1\leq m \leq M} f_1(\mathbf{\widetilde{x}}_m),
\end{equation}
where
\begin{equation}
	a= (1-\mu) \displaystyle \min_{\substack{{1 \leq k \leq  {K}}\\ {1 \leq i \leq  {M}}}} ~ \min_{  \Delta \mathbf{r}^T_{k,i}  \Delta \mathbf{r}_{k,i} \leq \bar{d}^2_{k,i} } R_{k,i},
\end{equation}
is a constant term w.r.t. $\mathbf{\widetilde{X}}$.
By exploiting the idea of fractional programming \cite{dinkelbach1967nonlinear} and assuming $g_2(\widetilde{\mathbf{x}}_m) >0$ (to ensure that $f_1(\widetilde{\mathbf{x}}_m)$ has a finite value), it is proved that the objective function in \eqref{mmb} can be dealt with via iterative maximization of the below function w.r.t. $\mathbf{\widetilde{X}}$:
\begin{equation}
	q(\mathbf{\widetilde{X}})= a + \displaystyle \mu \min_{1\leq m \leq M} f_2(\mathbf{\widetilde{x}}_m),
\end{equation}
where
\begin{equation}
 f_2(\widetilde{\mathbf{x}}_m)= g_1(\widetilde{\mathbf{x}}_m) - f_1(\widetilde{\mathbf{x}}_m^{(\kappa-1)}) g_2(\widetilde{\mathbf{x}}_m). 
\end{equation}
	\end{prop}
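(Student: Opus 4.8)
The plan is to invoke the classical Dinkelbach transform for single-ratio fractional programs and then argue that it survives the ``$\min_m$'' and the additive constant. First I would recall that for each fixed $m$, the ratio $f_1(\widetilde{\mathbf{x}}_m)=g_1(\widetilde{\mathbf{x}}_m)/g_2(\widetilde{\mathbf{x}}_m)$ with $g_1\ge 0$ (it is a positive-semidefinite quadratic form through $\hm{\Gamma}_m\succeq\bzero$) and $g_2>0$ (guaranteed by assumption, since $\widetilde{\hm{\Gamma}}_m\succeq\bzero$ and $\gamma_m>0$) admits the Dinkelbach characterization: maximizing $f_1$ is equivalent to finding the unique root $\lambda^\star$ of $F(\lambda)=\max_{\widetilde{\mathbf{x}}_m}\{g_1(\widetilde{\mathbf{x}}_m)-\lambda\,g_2(\widetilde{\mathbf{x}}_m)\}$, and that the fixed-point iteration $\lambda^{(\kappa)}=f_1(\widetilde{\mathbf{x}}_m^{(\kappa-1)})$, $\widetilde{\mathbf{x}}_m^{(\kappa)}\in\arg\max\{g_1-\lambda^{(\kappa)}g_2\}$, produces a monotonically non-decreasing sequence $\{f_1(\widetilde{\mathbf{x}}_m^{(\kappa)})\}$ converging to the optimum. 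The surrogate $f_2(\widetilde{\mathbf{x}}_m)=g_1(\widetilde{\mathbf{x}}_m)-f_1(\widetilde{\mathbf{x}}_m^{(\kappa-1)})\,g_2(\widetilde{\mathbf{x}}_m)$ is exactly the Dinkelbach auxiliary function evaluated at the current parameter $\lambda^{(\kappa)}=f_1(\widetilde{\mathbf{x}}_m^{(\kappa-1)})$.

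Next I would handle the two structural wrappers. The additive term $a$ is constant in $\mathbf{\widetilde{X}}$ by hypothesis, so it plays no role in the maximization and can simply be carried along; this is immediate. The $\min_{1\le m\le M}$ is the only real subtlety: I would show that at the current iterate, $f_2(\widetilde{\mathbf{x}}_m^{(\kappa-1)})=g_1(\widetilde{\mathbf{x}}_m^{(\kappa-1)})-f_1(\widetilde{\mathbf{x}}_m^{(\kappa-1)})g_2(\widetilde{\mathbf{x}}_m^{(\kappa-1)})=0$ for every $m$ (since $f_1=g_1/g_2$), hence $\min_m f_2(\widetilde{\mathbf{x}}_m^{(\kappa-1)})=0$; and that for the new iterate, $f_2(\widetilde{\mathbf{x}}_m^{(\kappa)})\ge 0$ implies, by rearranging, $g_1(\widetilde{\mathbf{x}}_m^{(\kappa)})\ge f_1(\widetilde{\mathbf{x}}_m^{(\kappa-1)})g_2(\widetilde{\mathbf{x}}_m^{(\kappa)})$, i.e. $f_1(\widetilde{\mathbf{x}}_m^{(\kappa)})\ge f_1(\widetilde{\mathbf{x}}_m^{(\kappa-1)})$, because $g_2>0$. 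Since maximizing $q(\mathbf{\widetilde{X}})$ with the $f_2$ objective drives $\min_m f_2$ above its value $0$ at the previous point, every component ratio $f_1(\widetilde{\mathbf{x}}_m)$ is non-decreasing, and therefore $\min_m f_1$ is non-decreasing across iterations — which is precisely the ascent property needed to justify ``iterative maximization.'' One should also note the constraints on $\widetilde{\mathbf{X}}$ ($\textrm{C}_5$, the unimodularity/power constraint) are the same in both problems, so the equivalence is on the same feasible set and the per-$m$ Dinkelbach argument goes through verbatim with a constrained inner maximization.

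The step I expect to be the main obstacle is making the ``$\min_m$'' argument fully rigorous, since the standard Dinkelbach theory is stated for a single ratio and here we have a max-min of $M$ ratios sharing a common variable block $\mathbf{\widetilde{X}}$ (the $\widetilde{\mathbf{x}}_m$'s are separate, but the $\min$ couples them through the objective). The cleanest route is the monotonicity chain above: show $\min_m f_2^{(\kappa)}\ge 0 = \min_m f_2$ evaluated at $\mathbf{\widetilde{X}}^{(\kappa-1)}$, deduce that $\max$ of the $f_2$-objective is at least as large as its value at the previous iterate, then translate that coordinatewise into $f_1(\widetilde{\mathbf{x}}_m^{(\kappa)})\ge f_1(\widetilde{\mathbf{x}}_m^{(\kappa-1)})$ for all $m$ simultaneously, hence $\min_m f_1^{(\kappa)}\ge \min_m f_1^{(\kappa-1)}$. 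A secondary point worth a sentence is boundedness: $f_1(\widetilde{\mathbf{x}}_m)$ is bounded above on the feasible set (the constant-modulus constraint makes it compact, or one bounds by $\lambda_{\max}(\widetilde{\hm{\Gamma}}_m^{-1/2}\hm{\Gamma}_m\widetilde{\hm{\Gamma}}_m^{-1/2})$ type quantities), so the non-decreasing sequence $\{\min_m f_1^{(\kappa)}\}$ converges; one may optionally characterize the limit via the first-order (KKT/stationarity) conditions of the original max-min, but for the proposition as stated only the equivalence and the ascent property are required.
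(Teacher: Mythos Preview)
Your proposal is correct and follows essentially the same approach as the paper's proof: both show that $f_2(\widetilde{\mathbf{x}}_m^{(\kappa-1)})=0$ for every $m$, observe that the (constrained) maximizer of the $f_2$-surrogate therefore satisfies $f_2(\widetilde{\mathbf{x}}_m^{(\kappa)})\ge 0$ for all $m$, and divide by $g_2>0$ to conclude $f_1(\widetilde{\mathbf{x}}_m^{(\kappa)})\ge f_1(\widetilde{\mathbf{x}}_m^{(\kappa-1)})$ for every $m$, hence $\min_m f_1$ and $q(\widetilde{\mathbf{X}})$ are non-decreasing. The only cosmetic difference is that the paper exploits the separability of the $\widetilde{\mathbf{x}}_m$'s to take a per-$m$ $\arg\max$ of $f_2$ directly, whereas you reach the same inequality via $\min_m f_2^{(\kappa)}\ge 0$; your additional remarks on boundedness and convergence go beyond what the paper actually proves but are consistent with it.
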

	\begin{proof}
		Please refer to Appendix~\ref{app11}.
	\end{proof}
	Then, considering ${\parallel \widetilde{\mathbf{x}}_m \parallel }^2_2 =\widetilde{N} p_m^{dl}[n]$ from the unimodularity constraint $\textrm{C}_5$ in \eqref{maxmin1}, one can write
\begin{equation} \label{nnm}
	f_2(\widetilde{\mathbf{x}}_m)=\widetilde{\mathbf{x}}_m^H \hm{\Upsilon}_{m}^{(\kappa-1)}\widetilde{\mathbf{x}}_m,
\end{equation}
where
\begin{equation*}
	\hm{\Upsilon}_{m}^{(\kappa-1)}=
	\hm{{\Gamma}}_m- f_1(\widetilde{\mathbf{x}}_m^{(\kappa-1)}) \left (\widetilde{\hm{\Gamma}}_m+\frac{\gamma_m}{ \widetilde{N}p_m^{dl}[n]}\mathbf{I}_{\widetilde{N}} \right).
\end{equation*}
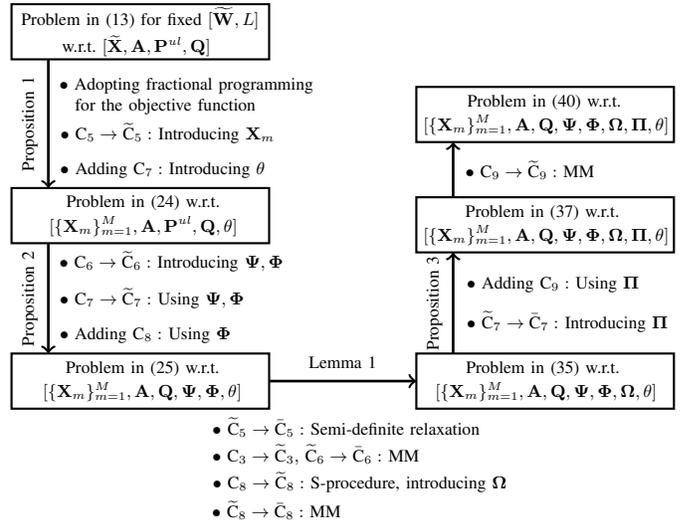
\begin{figure}
	\centering
	\begin{tikzpicture}[even odd rule,rounded corners=2pt,x=12pt,y=12pt,scale=.58,every node/.style={scale=.73}]

		\node[draw=white!20, rotate=90, minimum height=-1 cm, minimum width=1 cm] at (16, -13.25) {\small Proposition~\ref{cor3}};

		\node[draw=white!20, rotate=90, minimum height=-1 cm, minimum width=1 cm] at (-6, -13) {\small Proposition~\ref{cor2}};
		
		\draw[thick ,fill=white!10,rounded corners=0pt] (-7,0) rectangle ++(14,3) node[midway]{ \shortstack[l]{\small Problem in \eqref{maxmin1} for fixed $[\widetilde{\mathbf{W}}, L]$\\    \small \hspace{18pt} w.r.t. $[\widetilde{\mathbf{X}}, \mathbf{A}, \mathbf{P}^{ul},\mathbf{Q}]$}};
		
		\draw[->,line width=1] (-5,0)--+(0,-7);
		
		\node[draw=white!20, rotate=90, minimum height=-1 cm, minimum width=1 cm] at (-6, -3.5) {\small Proposition~\ref{cor1}};
		
		\node [] at (2.5,-2) {\shortstack[l]{ \small $\bullet$ Adopting fractional programming \\ \small \hspace{5pt} for the objective function}};
		
		\node [] at (2.5,-4) {\hspace{-23pt} \small$\bullet$ ${\textrm{C}}_5 \rightarrow \widetilde{\textrm{C}}_5:$ Introducing $\mathbf{X}_m$};
		
		\node [] at (2.5,-6) {\hspace{-28pt} \small$\bullet$ Adding ${\textrm{C}}_7:$ Introducing $\theta$};

		\draw[thick ,fill=white!10,rounded corners=0pt] (-7,-10) rectangle ++(14,3) node[midway]{ \shortstack[l]{\small \hspace{12pt}Problem in \eqref{maxmin3} w.r.t. \\    \small \hspace{0pt} $[\lbrace {\mathbf{X}}_m\rbrace_{m=1}^{M},\mathbf{A}, \mathbf{P}^{ul}, \mathbf{Q}, \theta]$}};
		
		\draw[->,line width=1] (-5,-10)--+(0,-6);

		\node [] at (2.5,-11) {\hspace{-19pt} \small$\bullet$ ${\textrm{C}}_6 \rightarrow \widetilde{\textrm{C}}_6:$ Introducing $\hm{\Psi}, \hm{\Phi}$};
		
		\node [] at (2.5,-13) {\hspace{-39pt} \small$\bullet$ ${\textrm{C}}_7 \rightarrow \widetilde{\textrm{C}}_7:$ Using $\hm{\Psi}, \hm{\Phi}$};
		
		\node [] at (2.5,-15) {\hspace{-46pt} \small$\bullet$ Adding ${\textrm{C}}_8:$ Using $\hm{\Phi}$};
	
		\draw[thick ,fill=white!10,rounded corners=0pt] (-7,-19) rectangle ++(14,3) node[midway]{ \shortstack[l]{\small \hspace{14pt}Problem in \eqref{maxmin4} w.r.t. \\    \small \hspace{-1pt} $[\lbrace {\mathbf{X}}_m\rbrace_{m=1}^{M},\mathbf{A}, \mathbf{Q},\hm{\Psi}, \hm{\Phi}, \theta]$}};	
		
		\draw[->,line width=1] (7,-17.5)--+(8,0);
		
		\draw[thick ,fill=white!10,rounded corners=0pt] (15,-19) rectangle ++(14,3) node[midway]{ \shortstack[l]{\small \hspace{17pt}Problem in \eqref{maxmin5} w.r.t.\\    \small \hspace{-2pt} $[\lbrace {\mathbf{X}}_m\rbrace_{m=1}^{M},\mathbf{A}, \mathbf{Q},\hm{\Psi}, \hm{\Phi}, \hm{\Omega}, \theta]$}};
		
		\node[draw=white!20, rotate=0, minimum height=-1 cm, minimum width=1 cm] at (11, -16.5) {\small Lemma~1};
		
		\node [] at (11,-20) {\hspace{0pt} \small$\bullet$ $\widetilde{\textrm{C}}_5 \rightarrow \bar{\textrm{C}}_5:$ Semi-definite relaxation};
		
		\node [] at (13.34-4.3,-21.5) {\hspace{06pt} \small$\bullet$ ${\textrm{C}}_3 \rightarrow \widetilde{\textrm{C}}_3,$ $\widetilde{\textrm{C}}_6 \rightarrow \bar{\textrm{C}}_6:$ MM};
		
		\node [] at (13.11,-23) {\hspace{-26pt} \small$\bullet$ ${\textrm{C}}_8 \rightarrow \widetilde{\textrm{C}}_8:$ S-procedure, introducing $\hm{\Omega}$};
		
		\node [] at (12.88-4.3,-24.5) {\hspace{-26pt} \small$\bullet$  $\widetilde{\textrm{C}}_8 \rightarrow \bar{\textrm{C}}_8:$ MM};
		
		\draw[->,line width=1] (17,-16)--+(0,5.5);

		\node [] at (24.5,-12.25) {\hspace{-43pt} \small$\bullet$ Adding ${\textrm{C}}_9 :$ Using $\hm{\Pi}$};
		
		\node [] at (24.5,-14.25) {\hspace{-28pt} \small$\bullet$ $\widetilde{\textrm{C}}_7 \rightarrow \bar{\textrm{C}}_7:$ Introducing $\hm{\Pi}$};

			\draw[thick ,fill=white!10,rounded corners=0pt] (15,-10.5) rectangle ++(14,3) node[midway]{ \shortstack[l]{\small \hspace{23pt}Problem in \eqref{maxmin9} w.r.t. \\    \small \hspace{-2pt} $[\lbrace {\mathbf{X}}_m\rbrace_{m=1}^{M},\mathbf{A}, \mathbf{Q},\hm{\Psi}, \hm{\Phi}, \hm{\Omega}, \hm{\Pi}, \theta]$}};
			
			\draw[->,line width=1] (17,-7.5)--+(0,3);
			
			\node [] at (24.5,-6) {\hspace{-67pt} \small$\bullet$ ${\textrm{C}}_9 \rightarrow \widetilde{\textrm{C}}_9:$ MM};

			\draw[thick ,fill=white!10,rounded corners=0pt] (15,-4.5) rectangle ++(14,3) node[midway]{ \shortstack[l]{\small \hspace{24pt}Problem in \eqref{maxmin6} w.r.t. \\    \small \hspace{-2pt} $[\lbrace {\mathbf{X}}_m\rbrace_{m=1}^{M},\mathbf{A}, \mathbf{Q},\hm{\Psi}, \hm{\Phi}, \hm{\Omega}, \hm{\Pi}, \theta]$}};

	\end{tikzpicture}
	\caption{The diagram of the subproblem in \ref{joint}.}
	\label{hh11}
	\centering
\end{figure}
Now, the second term in the objective function can be written as 
\begin{equation} \label{mm1}
\min_{1 \leq m \leq  {M}  } \hspace{1pt} \widetilde{\mathbf{x}}_m^H \widetilde{\hm{\Upsilon}}_{m}^{(\kappa-1)}\widetilde{\mathbf{x}}_m,
\end{equation}	
where ${\widetilde{\hm{\Upsilon}}}_{m}=\lambda_{m} \mathbf{I}_{\widetilde{N}} + \hm{\Upsilon}_{m}$ and $\lambda_{m}$ must be selected in such a way that ${\widetilde{\hm{\Upsilon}}}_{m}$ be a positive definite matrix \cite{soltanalian2014designing}.
Next, by defining a rank-1 matrix ${\mathbf{X}}_m=\widetilde{\mathbf{x}}_m \widetilde{\mathbf{x}}_m^H$, the quadratic term in \eqref{mm1} can be written as a linear term $\textrm{tr}\big( {\widetilde{\hm{\Upsilon}}}_{m}^{(\kappa-1)} {\mathbf{X}}_m \big)$ w.r.t. ${\mathbf{X}}_m$, and the constraint $\textrm{C}_5$ can be expressed as follows
\begin{equation} \label{ctil}
\widetilde{\textrm{C}}_{5}:\left[ {\mathbf{X}}_m \right]_{i,i}=p_m^{dl}[n],~ \textrm{rank} \left( {\mathbf{X}}_m \right)=1,~ {\mathbf{X}}_m\succeq \mathbf{0}, \hspace{2pt} \forall m,n,i.
\end{equation}
By using \eqref{ctil} and the linearized version of \eqref{mm1} as well as introducing an auxiliary variable $\theta$, the problem in \eqref{maxmin1} for fixed $[\widetilde{\mathbf{W}},L]$ can be reformulated as
\begin{align}\label{maxmin3}
&\max_{\lbrace {\mathbf{X}}_m\rbrace_{m=1}^{M},\mathbf{A}, \mathbf{P}^{ul}, \mathbf{Q}, \theta } ~~ \theta\\ \nonumber 
	&\mbox{s.t.} ~~\textrm{C}_{1}-\textrm{C}_{3}, \widetilde{\textrm{C}}_{5}, \textrm{C}_{6},
	\\ \nonumber &\scalemath{.96}{   \textrm{C}_{7}:   (1-\mu)  \min_{  \Delta \mathbf{r}^T_{k,i}  \Delta \mathbf{r}_{k,i} \leq \bar{d}^2_{k,i} } \hspace{-12pt} R_{k,i} +  \textrm{tr}\left( {\widetilde{\hm{\Upsilon}}}_{m}^{(\kappa-1)} {\mathbf{X}}_m \right) \geq \theta , \hspace{2pt} \forall k,i,m.}
\end{align}

Now, to proceed further, we focus on the first term in constraint $\textrm{C}_7$ and the right-hand side of the constraint $\textrm{C}_6$ which are associated with the circular uncertainty regions. 
By introducing $\phi_{k,m,i}[n]$ and applying a change of variable ${\psi}_{k,m}[n]=\tau_{k,m} [n] p^{ul}_{k,m}[n],\hspace{3pt} {\psi}_{k,m}[n] \geq 0$, the problem in \eqref{maxmin3} can be recast as  
\begin{align}\label{maxmin4}
	&\max_{\lbrace {\mathbf{X}}_m\rbrace_{m=1}^{M},\mathbf{A}, \mathbf{Q}, \hm{\Psi}, \hm{\Phi}, \theta } ~~ \theta\\ \nonumber 
	&\mbox{s.t.}~~ \textrm{C}_{1}-\textrm{C}_{3},\widetilde{\textrm{C}}_{5}, \\ \nonumber &	\widetilde{\textrm{C}}_{6}: \sum^{n}_{j=1} \delta_t \psi_{k,m} [j]  \leq   \tau_0 \delta_t \epsilon_{k,m} \rho_0 \sum^{n}_{j=1} \sum^{M}_{i=1}  \frac{p_i^{dl}[j]}{\phi_{k,m,i} [j] }  \\ \nonumber & \hspace{17pt}+ E^0_{k,m}, \hspace{3pt} \forall k,m,n, 
	\\ \nonumber &   \widetilde{\textrm{C}}_{7}:   (1-\mu) \frac{\delta_t}{N} \sum_{n=1}^{N} \tau_k [n]  \textrm{log}_2 \left(1+ \frac{\rho_0 \psi_{k,i} [n] }{ {\sigma_{c,i}^2} \tau_k [n] \phi_{k,i,i}[n] } \right)  \\ \nonumber & \hspace{18pt}+  \textrm{tr}\left( {\widetilde{\hm{\Upsilon}}}_{m}^{(\kappa-1)} {\mathbf{X}}_m \right) \geq \theta , \hspace{2pt} \psi_{k,i} [n] \geq 0, \hspace{2pt} \forall k,i,m,
		\\ \nonumber &  \textrm{C}_{8}:	{\parallel \widetilde{\mathbf{q}}_i [n] - ( \bar{\mathbf{r}}_{k,m} + \Delta \mathbf{r}_{k,m} ) \parallel }^2_2 +z_m^2 [n] \leq \phi_{k,m,i} [n], \\ \nonumber & \hspace{20pt} \Delta \mathbf{r}^T_{k,m}  \Delta \mathbf{r}_{k,m} \leq \bar{d}^2_{k,m}, \hspace{2pt} \forall k,m,i,n,
\end{align}
where 
\begin{equation}
\hm{\Psi}=\lbrace \psi_{k,m}[n],\forall k,m,n \rbrace,
\end{equation}
and
\begin{equation}
	\hm{\Phi}=\lbrace \phi_{k,m,i}[n],\forall k,m,i,n \rbrace.
\end{equation}

\begin{prop}\label{cor2}
The optimal solution of the problem in \eqref{maxmin3} can be obtained from the solution to the problem in \eqref{maxmin4}.
\end{prop}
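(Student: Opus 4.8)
The plan is to present \eqref{maxmin4} as the outcome of two reversible manipulations of \eqref{maxmin3} and then read off the map that turns a solution of \eqref{maxmin4} into one of \eqref{maxmin3}. The first manipulation is the change of variable $\psi_{k,m}[n]=\tau_{k,m}[n]\,p^{ul}_{k,m}[n]$, $\psi_{k,m}[n]\ge 0$, which replaces $\mathbf{P}^{ul}$ by $\hm{\Psi}$; the second is the introduction of the slacks $\hm{\Phi}=\{\phi_{k,m,i}[n]\}$ together with $\textrm{C}_8$, which absorbs the two inner minimizations over the uncertainty balls $\Delta\mathbf{r}^T_{k,m}\Delta\mathbf{r}_{k,m}\le\bar{d}^2_{k,m}$ that sit on the right-hand side of $\textrm{C}_6$ and in the first term of $\textrm{C}_7$. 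Since both problems maximize the same scalar $\theta$, it is enough to exhibit a feasibility correspondence that preserves $\theta$; given an optimal $(\{\mathbf{X}_m\},\mathbf{A},\mathbf{Q},\hm{\Psi},\hm{\Phi},\theta^\star)$ of \eqref{maxmin4}, the tuple obtained by setting $p^{ul}_{k,m}[n]=\psi_{k,m}[n]/\tau_{k,m}[n]$ on the slots with $\tau_{k,m}[n]>0$ and $p^{ul}_{k,m}[n]=0$ otherwise is then the claimed optimizer of \eqref{maxmin3}.

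For the change of variable I would observe that on every slot with $\tau_{k,m}[n]>0$ the map $p^{ul}_{k,m}[n]\mapsto\psi_{k,m}[n]$ is a bijection of $[0,\infty)$ onto itself leaving each quantity of \eqref{maxmin3} intact: the causality sums $\sum_j\tau_{k,m}[j]\delta_t p^{ul}_{k,m}[j]$ become $\sum_j\delta_t\psi_{k,m}[j]$, and, using $h_{k,m,m}[n]=\rho_0/d^2_{k,m,m}[n]$, the throughput $R_{k,m}[n]$ becomes the perspective $\tau_{k,m}[n]\delta_t\log_2\!\big(1+\rho_0\psi_{k,m}[n]/(\sigma_{c,m}^2\tau_{k,m}[n]d^2_{k,m,m}[n])\big)$, which is jointly concave in $(\tau_{k,m}[n],\psi_{k,m}[n])$ — this concavity is the real payoff of the substitution, but it is not needed for the equivalence itself. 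On a slot with $\tau_{k,m}[n]=0$ the throughput and the expended energy both vanish for any $p^{ul}_{k,m}[n]$, so $\psi_{k,m}[n]=0$ is without loss of optimality and the value of $p^{ul}_{k,m}[n]$ is immaterial; this is the only corner case to dispatch.

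For the slacks I would invoke monotonicity of the path-loss model: $R_{k,i}[n]$ is strictly decreasing in the serving-link squared distance $d^2_{k,i,i}[n]$ and $\sum_j\sum_i h_{k,m,i}[j]p^{dl}_i[j]$ is componentwise strictly decreasing in each $d^2_{k,m,i}[j]$, so each inner minimization over a ball is equivalently an outer maximization of the relevant squared distances. As $\textrm{C}_8$ is exactly $\phi_{k,m,i}[n]\ge\max_{\|\Delta\mathbf{r}_{k,m}\|_2^2\le\bar{d}^2_{k,m}}\big(\|\widetilde{\mathbf{q}}_i[n]-\bar{\mathbf{r}}_{k,m}-\Delta\mathbf{r}_{k,m}\|_2^2+z_m^2[n]\big)$, one has $1/\phi_{k,m,i}[n]\le 1/d^2_{k,m,i}[n]$ for every admissible offset; substituting this bound into $\widetilde{\textrm{C}}_6$ and $\widetilde{\textrm{C}}_7$ yields $\textrm{C}_6$ and $\textrm{C}_7$, so any feasible point of \eqref{maxmin4} induces one of \eqref{maxmin3} with the same $\theta$ (which already delivers the recovery map above). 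For the reverse inclusion I would start from an optimal point of \eqref{maxmin3}, set each $\phi_{k,m,i}[n]$ equal to that maximum, and argue by the same monotonicity that $\textrm{C}_6$ and $\textrm{C}_7$ then imply $\widetilde{\textrm{C}}_6$ and $\widetilde{\textrm{C}}_7$, making $\theta^\star$ attainable in \eqref{maxmin4} as well.

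The step I expect to demand the most care is this reverse inclusion, because it asks the worst case of a \emph{sum} taken over a \emph{single} shared offset $\Delta\mathbf{r}_{k,m}$ — the right-hand side of $\textrm{C}_6$ contains such a sum over $(j,i)$ and the first term of $\textrm{C}_7$ a sum over $n$ — to be reproduced by the \emph{collection} of per-link, per-slot maxima stored in $\hm{\Phi}$. The forward direction is effortless since a per-term bound is never less pessimistic than the shared-offset worst case; for the converse I would verify, through the $\max$ in $\textrm{C}_8$ and the monotonicity of $\widetilde{\textrm{C}}_6$ and $\widetilde{\textrm{C}}_7$ in $\hm{\Phi}$, that driving every $\phi_{k,m,i}[n]$ to its extremal value does not loosen $\widetilde{\textrm{C}}_6$ or $\widetilde{\textrm{C}}_7$ past $\textrm{C}_6$ or $\textrm{C}_7$. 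The remaining items — bijectivity of the change of variable, the $\tau_{k,m}[n]=0$ corner case, and bookkeeping of which $\phi$-index feeds which constraint — are routine.
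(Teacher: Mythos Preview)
Your approach and the paper's are essentially the same: both establish that any feasible point of \eqref{maxmin4} is feasible for \eqref{maxmin3} with the same $\theta$ (so $\theta_{\ref{maxmin4}}\le\theta_{\ref{maxmin3}}$), and that the slack in $\textrm{C}_8$ collapses at the optimum. The paper argues the latter by contradiction (``if $\textrm{C}_8$ is not active, shrink $\phi_{k,m,i}[n]$ and increase $\theta$''), whereas you reach it by direct construction, setting each $\phi_{k,m,i}[n]$ to the worst-case squared distance from an optimizer of \eqref{maxmin3}. Your treatment is also more complete than the paper's on two counts: you handle the change of variable $\psi_{k,m}[n]=\tau_{k,m}[n]\,p^{ul}_{k,m}[n]$ and its $\tau_{k,m}[n]=0$ corner case explicitly (the paper's appendix does not mention it), and you flag the ``shared offset versus per-term worst case'' issue in the reverse inclusion---a genuine subtlety that the paper dispatches with the single word ``straightforwardly.'' In short, you recover the paper's argument and go a step further in identifying where it is thin.
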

\begin{proof}
Please refer to Appendix~\ref{app12}.
\end{proof}

It can be seen that the constraints $\textrm{C}_3$ and $\widetilde{\textrm{C}}_5$-$\widetilde{\textrm{C}}_7$ are still non-convex and $\textrm{C}_8$ has an infinite number of constraints due to the continuity of the corresponding user location uncertainty sets. The non-convexity of $\widetilde{\textrm{C}}_5$ originates from the rank-1 constraint. By adopting the semi-definite relaxation, we can drop the rank-1 constraint and proceed to solve the problem.

To proceed further, we can deal with the non-convexity of $\textrm{C}_3$ and $\widetilde{\textrm{C}}_6$ in the following. The left-hand side of both parts of $\textrm{C}_3$, along with the first term in the right-hand side of $\widetilde{\textrm{C}}_6$ can be minorized using their supporting hyperplane (see \cite{rezaei2019throughput,7898464} for more details). Therefore, the left-hand side of the first and second parts of $\textrm{C}_3$ as well as the first term in the right-hand side of $\widetilde{\textrm{C}}_6$ can be respectively obtained at the $\kappa$th iteration of the following expressions
\begin{align} \label{ttt1}
	{\parallel \widetilde{\mathbf{q}}^{(\kappa-1)}_m [n] - \mathbf{q}^{\mathrm{NFZ}}_{m,j} \parallel}^2_2 + 2 & \left( \widetilde{\mathbf{q}}^{(\kappa-1)}_m [n] - \mathbf{q}^{\mathrm{NFZ}}_{m,j} \right)^T \\ \nonumber & \times \left( \widetilde{\mathbf{q}}_m [n] - \widetilde{\mathbf{q}}^{(\kappa-1)}_m [n] \right),
\end{align}
\begin{align} \label{ttt12}
	\scalemath{.92}{-{\parallel \widetilde{\mathbf{q}}^{(\kappa-1)}_m [n] - \widetilde{\mathbf{q}}^{(\kappa-1)}_{m^{\prime}} [n] \parallel}^2_2 + 2} & \scalemath{.92}{\left(\widetilde{\mathbf{q}}^{(\kappa-1)}_m [n] - \widetilde{\mathbf{q}}^{(\kappa-1)}_{m^{\prime}} [n] \right)^T} \\ \nonumber & \scalemath{.92}{\times \left( \widetilde{\mathbf{q}}_m [n] - \widetilde{\mathbf{q}}_{m^{\prime}} [n] \right),}
\end{align}
\begin{align} \label{taumami}	
	\tau_0 \delta_t \epsilon_{k,m} \rho_0 \sum^{n}_{j=1}\sum^{M}_{i=1} \bigg \lbrace & \frac{p_i^{dl}[j]}{\phi^{(\kappa-1)}_{k,m,i} [j] } \\ \nonumber & - \frac{p_i^{dl}[j]\left( \phi_{k,m,i} [j] -\phi^{(\kappa-1)}_{k,m,i} [j] \right)}{\left(\phi^{(\kappa-1)}_{k,m,i} [j]\right)^2 }   \bigg \rbrace.   
\end{align}

Next, we consider the constraint $\textrm{C}_8$. Let us introduce a lemma which can be used to transform $\textrm{C}_8$ into a finite number of linear matrix inequalities (LMI)s.
\begin{lemma}[S-procedure]
Let a function $h_m (\mathbf{x}), m \in \lbrace 1,2 \rbrace, \mathbf{x} \in \complexC^{N}$, be defined as
\begin{equation}
	h_m (\mathbf{x})= \mathbf{x}^H   \mathbf{B}_m \mathbf{x} + 2 \Re \lbrace \mathbf{b}^{H}_m  \mathbf{x} \rbrace + b_m,
\end{equation}
where, $\mathbf{B}_m \in \mathbb{H}^{N \times N}, \mathbf{b}_m  \in \complexC^{N}$ and $b_m \in \realR$. Then, the implication $h_1 (\mathbf{x}) \leq 0  \Rightarrow h_2 (\mathbf{x}) \leq 0$ holds if
and only if there exists an $\omega \geq 0$ such that 
\begin{equation}
	\omega
	\begin{bmatrix}
		\mathbf{B}_1 & \mathbf{b}_1  \\  \mathbf{b}^{H}_1 &  {b}_1
	\end{bmatrix}
	- 
	\begin{bmatrix}
		\mathbf{B}_2 & \mathbf{b}_2  \\  \mathbf{b}^{H}_2 &  {b}_2
	\end{bmatrix} \succeq \mathbf{0},
\end{equation}
provided that there exists a point $\widehat{\mathbf{x}}$ such that $h_m(\widehat{\mathbf{x}}) < 0$. 
\end{lemma}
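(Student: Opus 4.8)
The plan is to prove the two implications separately; the ``if'' direction is a one-line computation, while essentially all the work is in the ``only if'' direction, which rests on a homogenization step followed by a convexity/separation argument. Throughout I would write $\mathbf{M}_m = \begin{bmatrix} \mathbf{B}_m & \mathbf{b}_m \\ \mathbf{b}_m^H & b_m \end{bmatrix} \in \mathbb{H}^{(N+1)\times(N+1)}$ for the bordered matrix appearing inside the claimed LMI, and record the elementary identity $h_m(\mathbf{x}) = \widehat{\mathbf{z}}^H \mathbf{M}_m \widehat{\mathbf{z}}$ with $\widehat{\mathbf{z}} = [\mathbf{x}^T,\,1]^T$.

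For the ``if'' direction, suppose $\omega \ge 0$ satisfies $\omega \mathbf{M}_1 - \mathbf{M}_2 \succeq \mathbf{0}$ and take any $\mathbf{x}$ with $h_1(\mathbf{x}) \le 0$. Multiplying the matrix inequality on the left and right by $\widehat{\mathbf{z}} = [\mathbf{x}^T,1]^T$ gives $\omega h_1(\mathbf{x}) - h_2(\mathbf{x}) \ge 0$, whence $h_2(\mathbf{x}) \le \omega h_1(\mathbf{x}) \le 0$ because $\omega \ge 0$ and $h_1(\mathbf{x}) \le 0$. This direction does not use the Slater-type assumption.

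For the ``only if'' direction I would first homogenize. For $(\mathbf{x},t) \in \complexC^{N+1}$ set $\bar{h}_m(\mathbf{x},t) = \mathbf{z}^H \mathbf{M}_m \mathbf{z}$ with $\mathbf{z} = [\mathbf{x}^T,\,t]^T$, so that $\bar{h}_m(\mathbf{x},t) = |t|^2 h_m(\mathbf{x}/t)$ for $t \ne 0$ and $\bar{h}_m(\mathbf{x},0) = \mathbf{x}^H \mathbf{B}_m \mathbf{x}$. The first step is to upgrade the hypothesis to the \emph{homogeneous} implication $\mathbf{z}^H \mathbf{M}_1 \mathbf{z} \le 0 \Rightarrow \mathbf{z}^H \mathbf{M}_2 \mathbf{z} \le 0$ for all $\mathbf{z} \in \complexC^{N+1}$: for $t \ne 0$ this is immediate from the scaling identity, and for $t = 0$ one evaluates $h_1$ and $h_2$ along the ray $\widehat{\mathbf{x}} + s\mathbf{x}$ as $s \to +\infty$, where $\widehat{\mathbf{x}}$ is the point with $h_1(\widehat{\mathbf{x}}) < 0$; if $\mathbf{x}^H\mathbf{B}_1\mathbf{x} \le 0$ then $h_1(\widehat{\mathbf{x}} + s\mathbf{x}) < 0$ for large $s$ (replacing $\mathbf{x}$ by $-\mathbf{x}$ if needed to make the cross term nonpositive in the borderline case $\mathbf{x}^H\mathbf{B}_1\mathbf{x} = 0$), so $h_2(\widehat{\mathbf{x}}+s\mathbf{x}) \le 0$ for large $s$, which forces $\mathbf{x}^H \mathbf{B}_2\mathbf{x} \le 0$. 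Next I would invoke the classical fact that over the complex field the joint range $\mathcal{W} = \{(\mathbf{z}^H \mathbf{M}_1 \mathbf{z},\, \mathbf{z}^H \mathbf{M}_2 \mathbf{z}) : \mathbf{z} \in \complexC^{N+1}\}$ is a convex cone in $\realR^2$ --- equivalently, that the numerical range of the single complex matrix $\mathbf{M}_1 + i\mathbf{M}_2$ is convex (Toeplitz--Hausdorff; Brickman for the joint-range form). The homogeneous implication says $\mathcal{W}$ avoids the open convex set $\{(u,v) : u \le 0,\, v > 0\}$, so a separating hyperplane yields (after a sign choice) $(\lambda_1,\lambda_2) \ne (0,0)$ with $\lambda_1 u + \lambda_2 v \ge 0$ on $\mathcal{W}$ (using that $\mathcal{W}$ is a cone through the origin) and $\lambda_1 u + \lambda_2 v \le 0$ on the open set; pushing $u \to -\infty$ and $v \to +\infty$ forces $\lambda_1 \ge 0$ and $\lambda_2 \le 0$, while the first inequality is exactly $\lambda_1 \mathbf{M}_1 + \lambda_2 \mathbf{M}_2 \succeq \mathbf{0}$. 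If $\lambda_2 < 0$, putting $\omega = \lambda_1/|\lambda_2| \ge 0$ gives $\omega \mathbf{M}_1 - \mathbf{M}_2 \succeq \mathbf{0}$; the leftover case $\lambda_2 = 0$ forces $\lambda_1 > 0$ and $\mathbf{M}_1 \succeq \mathbf{0}$, contradicting $h_1(\widehat{\mathbf{x}}) = \widehat{\mathbf{z}}^H\mathbf{M}_1\widehat{\mathbf{z}} < 0$, so it cannot occur.

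I expect the main obstacle to be the convexity of the joint range $\mathcal{W}$: it is the one nontrivial input, it is what legitimizes the separating-hyperplane step, and it is exactly the place where working over $\complexC^N$ is essential --- the real analogue can fail and would need an extra hypothesis (e.g.\ a dimension or definiteness condition). The remaining pieces --- the easy direction, the homogenization bookkeeping, the $t = 0$ case via the strictly feasible point $\widehat{\mathbf{x}}$, and the concluding sign/case analysis --- are routine.
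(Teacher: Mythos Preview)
Your proposal is correct and follows the classical route to the S-procedure: homogenize via the bordered matrices, invoke convexity of the joint numerical range of two Hermitian forms over $\complexC$ (Toeplitz--Hausdorff/Brickman), then separate and read off the multiplier. The paper itself does not prove this lemma at all --- it simply cites a standard reference --- so there is no in-paper argument to compare against; your outline is precisely the textbook proof one would find behind that citation.

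One cosmetic slip worth fixing: the set $\{(u,v):u\le 0,\ v>0\}$ is not open in $\realR^2$. This does not damage the argument, since the set has nonempty interior and the separating-hyperplane theorem still applies (alternatively, work with the genuinely open quadrant $\{u<0,\ v>0\}$, which $\mathcal{W}$ also avoids because the homogeneous implication already rules out points with $u=0$ and $v>0$).
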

\begin{proof}Please see \cite{boyd2004convex}.\end{proof}
Then, we can rewrite constraint ${\textrm{C}}_{8}$ as 
\begin{align*}
	&\textrm{C}_{8}:  \Delta \mathbf{r}^T_{k,m}  \Delta \mathbf{r}_{k,m}  +2 \Re \lbrace   
	(  \bar{\mathbf{r}}_{k,m} - \widetilde{\mathbf{q}}_i [n])^T
	\Delta \mathbf{r}_{k,m} \rbrace \\ &+  (\widetilde{\mathbf{q}}_i [n] - \bar{\mathbf{r}}_{k,m} )^T (\widetilde{\mathbf{q}}_i [n] - \bar{\mathbf{r}}_{k,m} ) + z_m^2 [n]-  {\phi}_{k,m,i} [n] \leq 0.
\end{align*}
Then, using Lemma~1 and considering user location uncertianty region $\Delta \mathbf{r}^T_{k,m}  \Delta \mathbf{r}_{k,m} \leq \bar{d}^2_{k,m}$, we can equivalently rewrite the constraint ${\textrm{C}}_8$ as:
\begin{align} \label{mj10}
&	\widetilde{\textrm{C}}_8 : \mathbf{S} (\mathbf{Q},\hm{\Phi},\omega_{k,m,i}[n] )=\\ \nonumber & 
	\begin{bmatrix}
	\scalemath{.98}{	(\omega_{k,m,i}[n] -1) \mathbf{I}_{2}} & \scalemath{.98}{\widetilde{\mathbf{q}}_i [n] -\bar{\mathbf{r}}_{k,m}}  \\ 
		\scalemath{.98}{( \widetilde{\mathbf{q}}_i [n] -\bar{\mathbf{r}}_{k,m} )^T} & \substack{ \scalemath{.98}{- \omega_{k,m,i}[n] \bar{d}^2_{k,m}  + {\phi}_{k,m,i} [n]}
			\\   \scalemath{.98}{-   
				{\parallel \widetilde{\mathbf{q}}_i [n]- \bar{\mathbf{r}}_{k,m}\parallel }^2_2-z_m^2 [n]}}   
	\end{bmatrix}
	\succeq \mathbf{0},
\end{align}
with the variable $\omega_{k,m,i}[n] \geq0$. Note that the constraint $\widetilde{\textrm{C}}_8$ is still non-convex due to the quadratic term ${\parallel \widetilde{\mathbf{q}}_i [n]- \bar{\mathbf{r}}_{k,m}\parallel }^2_2$. For handling this, in light of MM, we construct a global underestimator for the mentioned quadratic term to minorize it and rewrite the constraint $\widetilde{\textrm{C}}_8$ at the $\kappa$th iteration as
\begin{align} \label{mj1}
	&	\bar{\textrm{C}}_8 : \mathbf{S} (\mathbf{Q},\hm{\Phi},\omega_{k,m,i}[n] )=\\ \nonumber & 
	\begin{bmatrix}
		\scalemath{.98}{	(\omega_{k,m,i}[n] -1) \mathbf{I}_{2}} & \scalemath{.98}{\widetilde{\mathbf{q}}_i [n] -\bar{\mathbf{r}}_{k,m}}  \\ 
		\scalemath{.98}{( \widetilde{\mathbf{q}}_i [n] -\bar{\mathbf{r}}_{k,m} )^T} & \substack{ \scalemath{.98}{- \omega_{k,m,i}[n] \bar{d}^2_{k,m}  + {\phi}_{k,m,i} [n]}
			\\   \scalemath{.98}{- \chi^{(\kappa)}_{k,m,i}[n]  
				-z_m^2 [n]}}   
	\end{bmatrix}
	\succeq \mathbf{0},
\end{align}
where 
\begin{align}
&\chi^{(\kappa)}_{k,m,i}[n] =	{\parallel \widetilde{\mathbf{q}}^{(\kappa-1)}_i [n] - \bar{\mathbf{r}}_{k,m} \parallel}^2_2 \\ \nonumber & + 2  \left( \widetilde{\mathbf{q}}^{(\kappa-1)}_i [n] - \bar{\mathbf{r}}_{k,m} \right)^T  \left( \widetilde{\mathbf{q}}_i [n] - \widetilde{\mathbf{q}}^{(\kappa-1)}_i [n] \right), \hspace{3pt}\forall k,m,i,n.
\end{align}
Now, based on the expressions in \eqref{ttt1}, \eqref{taumami}, and \eqref{mj1}, the problem in \eqref{maxmin4} can be restated as the following
\begin{align}\label{maxmin5}
	&\max_{\lbrace {\mathbf{X}}_m\rbrace_{m=1}^{M},\mathbf{A}, \mathbf{Q}, \hm{\Psi}, \hm{\Phi}, \hm{\Omega}, \theta } ~~ \theta\\ \nonumber 
	&\mbox{s.t.}~~ \textrm{C}_{1},~\textrm{C}_{2},~\widetilde{\textrm{C}}_{7},\\ \nonumber&
	\widetilde{\textrm{C}}_{3}: \eqref{ttt1} \geq \left(r^{\mathrm{NFZ}}_{m,j}\right)^2, \hspace{1pt} \forall m,n,j, ~ \eqref{ttt12} \geq d_{\textrm{min}}^2, \hspace{1pt} \forall m \neq m^{\prime},n, \\ \nonumber&
	\bar{\textrm{C}}_{5}:\left[ {\mathbf{X}}_m \right]_{i,i}=p_m^{dl}[n],~ {\mathbf{X}}_m\succeq \mathbf{0}, \hspace{2pt} \forall m,n,i,
	\\ \nonumber
	&\scalemath{.96}{	\bar{\textrm{C}}_{6}: \sum^{n}_{j=1} \delta_t \psi_{k,m} [j]  \leq  \eqref{taumami}+ E^0_{k,m}, ~ \forall k,m,n,}  
	\\ \nonumber &  \scalemath{.97}{ \bar{\textrm{C}}_{8}:	\mathbf{S} (\mathbf{Q},\hm{\Phi},\hm{\Omega} ) \succeq \mathbf{0},~\omega_{k,m,i}[n] \geq0, ~\forall k,m,i,n,}
\end{align}
where 
\begin{eqnarray}
	\hm{\Omega}=\lbrace \omega_{k,m,i}[n],\forall k,m,i,n \rbrace.
\end{eqnarray}
The logarithmic function of the first term in left-hand side of $\widetilde{\textrm{C}}_7$ is a non-concave term and so $\widetilde{\textrm{C}}_7$ is non-convex. By introducing the auxiliary variables $\pi_{k,i}[n]$, the problem in \eqref{maxmin5} can be equivalently rewritten as
\begin{align}\label{maxmin9}
	&\max_{\lbrace {\mathbf{X}}_m\rbrace_{m=1}^{M},\mathbf{A}, \mathbf{Q}, \hm{\Psi}, \hm{\Phi}, \hm{\Omega},\hm{\Pi}, \theta } ~~ \theta\\ \nonumber 
	&\mbox{s.t.}~~ \textrm{C}_{1},~\textrm{C}_{2},~\widetilde{\textrm{C}}_{3},~
	\bar{\textrm{C}}_{5},~	\bar{\textrm{C}}_{6},~\bar{\textrm{C}}_{8},
	\\ \nonumber &   \bar{\textrm{C}}_{7}:   (1-\mu) \frac{\delta_t}{N} \sum_{n=1}^{N} \tau_k [n]  \textrm{log}_2 \left(1+ \frac{\rho_0 \pi_{k,i}[n]  }{ {\sigma_{c,i}^2}  \tau_k [n]  } \right)  \\ \nonumber & \hspace{18pt}+  \textrm{tr}\left( {\widetilde{\hm{\Upsilon}}}_{m}^{(\kappa-1)} {\mathbf{X}}_m \right) \geq \theta , \hspace{2pt} \psi_{k,i} [n] \geq 0, \hspace{2pt} \forall k,i,m,
	\\ \nonumber &
	\textrm{C}_{9}:	\pi_{k,i}[n] \leq \frac{\psi_{k,i}[n]}{\phi_{k,i,i}[n]},~\forall k,i,n,
\end{align}
where 
\begin{eqnarray}
\hm{\Pi}=\lbrace \pi_{k,i}[n],\forall k,i,n \rbrace.
\end{eqnarray}
\begin{prop}\label{cor3}
The optimal solution of the problem in \eqref{maxmin5} can be obtained by solving the problem in \eqref{maxmin9}.
\end{prop}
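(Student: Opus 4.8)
The plan is to prove the claim by a value-preserving reformulation argument: I would show that \eqref{maxmin5} and \eqref{maxmin9} attain the same optimal value of $\theta$, and that a maximizer of the latter, after tightening $\textrm{C}_9$ to equality, yields a maximizer of the former. The only ingredients needed are: (i) the auxiliary variables $\pi_{k,i}[n]$ introduced in \eqref{maxmin9} occur \emph{only} in $\bar{\textrm{C}}_7$ and $\textrm{C}_9$; (ii) $\log_2(1+\cdot)$ is increasing and $\tau_k[n]\ge 0$ in $\bar{\textrm{C}}_7$, so the left-hand side of $\bar{\textrm{C}}_7$ is monotone nondecreasing in each $\pi_{k,i}[n]$; and (iii) $\phi_{k,i,i}[n]>0$, which holds because $\phi_{k,i,i}[n]$ plays the role of a squared UAV--user distance and is therefore bounded below by $z_i^2[n]\ge z_{\textrm{min}}^2>0$ through $\textrm{C}_2$, so the ratio $\psi_{k,i}[n]/\phi_{k,i,i}[n]$ is well defined.

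For the first direction, I would take any point feasible for \eqref{maxmin5} with objective value $\theta$ and augment it by setting $\pi_{k,i}[n]=\psi_{k,i}[n]/\phi_{k,i,i}[n]\ge 0$ for all $k,i,n$. Then $\textrm{C}_9$ holds with equality, the constraints $\textrm{C}_1$, $\textrm{C}_2$, $\widetilde{\textrm{C}}_3$, $\bar{\textrm{C}}_5$, $\bar{\textrm{C}}_6$, $\bar{\textrm{C}}_8$ are inherited verbatim since they do not involve $\hm{\Pi}$, and substituting this choice of $\pi_{k,i}[n]$ into $\bar{\textrm{C}}_7$ turns it into exactly $\widetilde{\textrm{C}}_7$, which is satisfied by hypothesis. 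Hence the augmented tuple is feasible for \eqref{maxmin9} with the same $\theta$, so the optimal value of \eqref{maxmin9} is at least that of \eqref{maxmin5}.

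For the reverse direction, I would start from an optimal tuple of \eqref{maxmin9} with value $\theta^\star$ and replace each $\pi_{k,i}[n]$ by $\psi_{k,i}[n]/\phi_{k,i,i}[n]$, which by $\textrm{C}_9$ is no smaller than the original value. By the monotonicity in fact (ii), and because raising one $\pi_{k,i}[n]$ leaves every other constraint instance untouched, this replacement can only increase the left-hand side of $\bar{\textrm{C}}_7$, so $\bar{\textrm{C}}_7$ still holds with the same $\theta^\star$ while all remaining constraints are unaffected. For the modified tuple $\bar{\textrm{C}}_7$ now coincides with $\widetilde{\textrm{C}}_7$, so discarding $\hm{\Pi}$ leaves a tuple feasible for \eqref{maxmin5} with value $\theta^\star$, whence the optimal value of \eqref{maxmin5} is at least that of \eqref{maxmin9}. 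The two inequalities give equality of the optimal values, and the construction of this paragraph converts any optimizer of \eqref{maxmin9} into one of \eqref{maxmin5}, which is the assertion.

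I do not expect a genuine analytical obstacle here; the step that must be carried out with care is the monotonicity argument in the reverse direction — verifying that pushing each $\pi_{k,i}[n]$ up to the bound in $\textrm{C}_9$ neither violates nor tightens any of $\textrm{C}_1$, $\textrm{C}_2$, $\widetilde{\textrm{C}}_3$, $\bar{\textrm{C}}_5$, $\bar{\textrm{C}}_6$, $\bar{\textrm{C}}_8$ and only relaxes (or leaves unchanged) $\bar{\textrm{C}}_7$ — together with the elementary check that $\phi_{k,i,i}[n]>0$ so that the reformulation is well posed.
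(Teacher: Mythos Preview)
Your proposal is correct and follows essentially the same approach as the paper. The paper's proof (Appendix~\ref{app12}) argues by contradiction that $\textrm{C}_9$ must be active at the optimum of \eqref{maxmin9} and then observes that, with $\textrm{C}_9$ active, \eqref{maxmin9} collapses to \eqref{maxmin5}; your two-direction feasibility argument via monotonicity of $\bar{\textrm{C}}_7$ in $\pi_{k,i}[n]$ is a slightly more explicit packaging of the same idea.
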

\begin{proof}
Please refer to Appendix~\ref{app12}.
\end{proof}
Then, the non-convexity of $\textrm{C}_9$ can be dealt with by  minorizing its right-hand side using the following minorizer:
\begin{align} \label{keyjoint}
&\frac{\psi^{(\kappa-1)}_{k,i}[n]}{\phi^{(\kappa-1)}_{k,i,i}[n]} +\frac{1}{\phi^{(\kappa-1)}_{k,i,i}[n]} \left( \psi_{k,i}[n]-\psi^{(\kappa-1)}_{k,i}[n] \right) \\ \nonumber &- \frac{\psi^{(\kappa-1)}_{k,i}[n]}{\left(\phi^{(\kappa-1)}_{k,i,i}[n]\right)^2} \left( \phi_{k,i,i}[n]-\phi^{(\kappa-1)}_{k,i,i}[n] \right),
\end{align}
and the problem in \eqref{maxmin9} can be reformulated as
\begin{align}\label{maxmin6}
	&\max_{\lbrace {\mathbf{X}}_m\rbrace_{m=1}^{M},\mathbf{A}, \mathbf{Q}, \hm{\Psi}, \hm{\Phi}, \hm{\Omega},\hm{\Pi}, \theta } ~~ \theta\\ \nonumber 
	&\mbox{s.t.}~~ \textrm{C}_{1},~\textrm{C}_{2},~\widetilde{\textrm{C}}_{3},~\bar{\textrm{C}}_{5}-\bar{\textrm{C}}_{8}, ~
	\widetilde{\textrm{C}}_{9}:	\pi_{k,i}[n] \leq \eqref{keyjoint},~\forall k,i,n.
\end{align}
Since the logarithmic term in $\bar{\textrm{C}}_7$ is jointly concave w.r.t. $\tau_k [n]$ and $\pi_{k,i} [n]$, the constraint $\bar{\textrm{C}}_7$ and therefore, the problem in \eqref{maxmin6} are convex and can be solved efficiently by e.g., interior point methods. Note that $\mathbf{P}^{ul}$ can be synthesized after convergence of \eqref{maxmin6} as follows:
\begin{equation} \label{key2}
p_{k,m}^{ul}[n]=
\begin{cases}
\frac{\psi_{k,m}[n]}{\tau_{k,m} [n]},~  \tau_{k,m} [n] \neq 0,
\\
0, \hspace{29pt}  \tau_{k,m} [n]=0.
\end{cases}
\end{equation}
\subsection{Maximization over $L$ for fixed $[\widetilde{\mathbf{X}},\widetilde{\mathbf{W}}, \mathbf{A}, \mathbf{P}^{ul},\mathbf{Q}]$}\label{L}
As a final step, the problem in \eqref{maxmin1} w.r.t. the scalar $L$ can be solved via 1D search over its finite possible values in $\textrm{C}_4$.
\begin{algorithm}[t] \label{tt}
	\caption{The Proposed Method for Joint Maximization of Minimum Radar SINR and Minimum Communication Throughput}
	\begin{algorithmic}[tb]
		\STATE{\!\!\!\!\!\!\!\!\!\!\!\!\!  Main-0:}  Initialize ${\mathbf{X}}^{(i)}$, $L^{(i)}$, and set $i\leftarrow0$.
		\REPEAT
		\STATE{A:} Compute $\widetilde{\mathbf{w}}^{(i)}_m,~\forall m$ via the closed-form solution in \eqref{jjk}.
		\STATE{B-0:} Initialize $\widetilde{\mathbf{W}}^{(\kappa)}$, $L^{(\kappa)}$, $\hm{\Phi}^{(\kappa)}$, $\hm{\Psi}^{(\kappa)}$, and set $\kappa \leftarrow 0$.
		\REPEAT
		\STATE{B-1:} Solve the convex problem in \eqref{maxmin6}.
		\STATE{B-2:} Update $\kappa \leftarrow \kappa+1$.
		\UNTIL convergence
		\STATE{B-3:} Synthesize $\mathbf{P}^{ul}$ via \eqref{key2}.
		\STATE{C:} Solve the problem in \eqref{maxmin1} w.r.t. the scalar $L$ via 1D search over its finite possible values in $\textrm{C}_4$.
		\STATE{Main-1:} Update $i \leftarrow i+1$.
		\UNTIL convergence
		\STATE{\!\!\!\!\!\!\!\!\!\!\!\!\!  Main-2:} Synthesize $\widetilde{\mathbf{x}}_m$ form ${\mathbf{X}}_m$.
	\end{algorithmic}
\end{algorithm}
\subsection{Waveform Synthesis, Convergence, and Complexity Analysis}
Algorithm~1 summarizes the steps of the proposed method for jointly maximizing the minimum radar SINR and minimum communication throughput in a multi-UAV enabled integrated sensing and wireless powered communication system. The proposed method consists of
outer iterations which are denoted by superscript $i$. At each outer iteration, we have 3 steps associated with the subproblems in \ref{w}, \ref{joint} (which is denoted by superscript $\kappa$), and \ref{L}. At the end of the algorithm, we may synthesize the waveform $\mathbf{\widetilde{x}}_m$ from matrix $\mathbf{{X}}_m$ using e.g., the rank-1 approximation methods based on randomization techniques (see \cite{de2011design} for details).

Note that to ensure convergence to a stationary point, the sequence of objective values of the problem in \eqref{maxmin1} must be ascending in each subproblem. More precisely, let $g (.)$ denote the aforementioned objective function. We can write
\begin{align*}
	& \scalemath{.95}{g \left( {\widetilde{\mathbf{W}}}^{(i +1)}, [\widetilde{\mathbf{X}}^{(i +1)}, \mathbf{A}^{(i +1)}, \left(\mathbf{P}^{ul}\right)^{(i +1)},\mathbf{Q}^{(i +1)}],  L^{(i +1 )}\right)  \geq} \\ \nonumber & \scalemath{.95}{g \left( {\widetilde{\mathbf{W}}}^{(i +1)}, [\widetilde{\mathbf{X}}^{(i +1)}, \mathbf{A}^{(i +1)}, \left(\mathbf{P}^{ul}\right)^{(i +1)},\mathbf{Q}^{(i +1)}],  L^{(i )}\right) \geq} \\ \nonumber & \scalemath{.95}{g \left( {\widetilde{\mathbf{W}}}^{(i +1)}, [\widetilde{\mathbf{X}}^{(i)}, \mathbf{A}^{(i)}, \left(\mathbf{P}^{ul}\right)^{(i)},\mathbf{Q}^{(i)}],  L^{(i )}\right) \geq} \\ \nonumber & \scalemath{.95}{g \left( {\widetilde{\mathbf{W}}}^{(i)}, [\widetilde{\mathbf{X}}^{(i)}, \mathbf{A}^{(i)}, \left(\mathbf{P}^{ul}\right)^{(i)},\mathbf{Q}^{(i)}],  L^{(i)}\right) } ,
\end{align*}
where the inequalities above hold due to maximization performed in Subproblems \ref{w}, \ref{joint}, and \ref{L}, respectively. For the subproblems in \ref{w} and \ref{L}, the global maximum is obtained. Also, applying the proposed fractional programming and MM techniques to the design problem in \ref{joint} increases the associated objective function and, under mild conditions, provides stationary points of the problem \cite{rezaei2019throughput}. Therefore, due to boundedness of the objective function in \eqref{maxmin1}, the sequence of objective values in \eqref{maxmin1} obtained by the proposed method converges.

Next, the computational complexity of the proposed method is considered. For the subproblem in Subsection~\ref{w}, the closed-form expression in \eqref{jjk} must be calculated which needs matrix multiplication and inversion leading to the complexity of\footnote{This can be decreased to $\mathcal{O}(\widetilde{N}^{2.373})$ by using the optimized algorithms (see e.g., \cite{davie2013improved} for details).} $\mathcal{O}(\widetilde{N}^3)$ for $\mathrm{UAV}_m$. 
At each inner iterations of the subproblem in \ref{joint}, the dominant computational burden is associated with the constraints $\bar{\textrm{C}}_5$ and $\bar{\textrm{C}}_8$ due to adopting the semi-definite relaxation. Hence, considering the problem in \eqref{maxmin6}, the computational complexity is $\mathcal{O}(\sqrt{n}\textrm{log}(1/\epsilon)(mn^3 +m^2 n^2 +m^3))$ where $\epsilon > 0$ indicates the solution accuracy, $m=M(1+MK)$ is the number of semi-definite relaxation-based constraints, and $n=\widetilde{N}$ is associated with the size of the related positive semi-definite matrix \cite[Theorem~3.12]{bomze2010interior}. Finally, the 1D search in Subsection~\ref{L} can be performed via the complexity of $\mathcal{O}(M\widetilde{N}^2)$ which comes from the objective function calculations. For typical problem sizes, each iteration of the algorithm completes within seconds on a standard desktop machine. Moreover, since channel conditions may vary slowly, the joint optimization can be performed offline, allowing UAV trajectories to be pre-computed and updated as needed. This makes Algorithm 1 suitable for real-time or near real-time applications, particularly in scenarios where planning occurs over short to moderate time scales.
\begin{rema}
Note that the value of $\widetilde{{\mathrm{SINR}}}_m$ is greater than $R_{k,i}$ in the objective function of $\eqref{maxmin1}$ for a typical numerical setup (see Section~\ref{num}). Therefore, to preserve the controlling role of the Pareto weight $\mu$, we modify the objective function of \eqref{maxmin1} as follows
	\begin{align}\label{keyr}
	\scalemath{.96}{(1-\mu) \displaystyle \min_{\substack{{1 \leq k \leq  {K}}\\ {1 \leq i \leq  {M}}}} ~ \min_{  \Delta \mathbf{r}^T_{k,i}  \Delta \mathbf{r}_{k,i} \leq \bar{d}^2_{k,i} } R_{k,i}  + \mu \hspace{2pt}\mu_0 \displaystyle \min_{1 \leq m \leq  {M}  } \hspace{1pt} \widetilde{\mathrm{SINR}}_m,}
	\end{align}
where $\mu_0 \in (0,1]$	is a constant parameter which can be determined under the numerical supervision, without losing the optimality of the solution to the problem.
\end{rema}
 	\section{Non-linear EH} \label{nonl}
 In Section~\ref{sum}, for the sake of tractability in the problem formulation and the proposed solution, we simplified the process by assuming linear EH circuits. However, in practice, the RF-to-DC energy conversion in EH circuits exhibits a non-linear behavior\cite{rezaei2019throughput,10356111,rezaei2024cooperative }. Therefore, we consider a non-linear characteristic between the harvested energy, i.e., $E_{k,m} [n]$ and the received power, i.e., $ p_{k,m} [n]= \sum_{i=1}^{M} h_{k,m,i} [n] p_i^{dl}[n]$ at each ground user. To accommodate this non-linearity in the design problem formulated in \eqref{maxmin1}, we need to adapt constraint $\textrm{C}_{6}$. Initially, a modified version of \eqref{eee} for the harvested energy by $\mathrm{U}_{k,m}$ for non-linear model can be expressed as  \cite{boshkovska2017robust} 
 	\begin{equation}\label{hh}
 		E^{nl}_{k,m} [n]=\tau_0 \delta_t  \frac{\frac{\iota_{k,m}}{1+\textrm{exp} \left(-\widetilde{\iota}_{k,m} \left(  p_{k,m} [n]   -\bar{\iota}_{k,m} \right)  \right)}-\iota_{k,m} t_{k,m}}{1- t_{k,m}} ,
 	\end{equation}
 	\begin{equation*}
 		t_{k,m}=\frac{1}{1+\textrm{exp} \left(\widetilde{\iota}_{k,m} \bar{\iota}_{k,m} \right)} ,
 	\end{equation*}
 	where $\iota_{k,m}$, $\widetilde{\iota}_{k,m}$ and $\bar{\iota}_{k,m}$ are the curve fitting parameters (see  \cite{rezaei2019throughput} for more details). As a result, the constraint $\textrm{C}_{6}$ in \eqref{maxmin1} changes in such a way that $E_{k,m} [n]$ is replaced with $E^{nl}_{k,m} [n]$. Then, a similar process to the one in Subsection~\ref{joint} is followed, with the difference that in problem in \eqref{maxmin4}, the first term in the right-hand side of the constraint $\widetilde{\textrm{C}}_{6}$ changes as follows
 	\begin{equation*}\label{nnb}
 		\widetilde{ E}_{k,m} [n]=  \tau_0 \delta_t  \sum^{n}_{j=1}  
 		\frac{\frac{\iota_{k,m}}{1+\textrm{exp} \left(-\widetilde{\iota}_{k,m} \left(   \widetilde{p}_{k,m}[j]  -\bar{\iota}_{k,m} \right)  \right)}-\iota_{k,m} t_{k,m}}{1- t_{k,m}},
 	\end{equation*}
 	where
 	\begin{equation}
 		\widetilde{p}_{k,m}[j]= \rho_0  \sum^{M}_{i=1}  \frac{p_i^{dl}[j]}{\phi_{k,m,i} [j] }.
 	\end{equation}
 	Next, we must deal with the non-convexity of $\widetilde{\textrm{C}}_{6}$. Note that for typical values of $\iota_{k,m}$, $\widetilde{\iota}_{k,m}$ and $\bar{\iota}_{k,m}$, the term $\widetilde{ E}_{k,m} [n]$ is a non-decreasing concave function w.r.t. $\widetilde{p}_{k,m}[j]$. Also, due to the positivity of $\rho_0$, $p_i^{dl}[j]$ and $\phi_{k,m,i} [j]$ (see ${\textrm{C}}_{8}$ in \eqref{maxmin4}), the parameter $\widetilde{p}_{k,m}[j]$ is a convex function w.r.t. $\phi_{k,m,i} [j]$. Therefore, $\widetilde{ E}_{k,m} [n]$ is neither convex nor concave w.r.t. $\phi_{k,m,i} [j]$. In addressing this challenge, our approach involves initially establishing a suitably large parameter $\xi_{k,m}[n]$ with the condition $\nabla^2 \widetilde{E}_{k,m} [n] + \xi_{k,m}[n]\mathbf{I}_{Mn} \succeq \mathbf{0}$, where the Hessian is computed w.r.t. the vector $\hm{\phi}_{k,m}[n]   =\lbrace \phi_{k,m,i}[j],\forall k,m,i,j=1,...,n \rbrace$. Subsequently, we express $\widetilde{ E}_{k,m} [n]$ as the sum of a convex function and a concave one: 
 	\begin{align} \label{jj}
 		\widetilde{E}_{k,m} [n]=& \underbrace{  \widetilde{E}_{k,m} [n] + \frac{1}{2} \xi_{k,m}[n] {\hm{\phi}^{T}_{k,m}[n]} \hm{\phi}_{k,m}[n]}_{\text{convex}}
 		\\ \nonumber &	\underbrace{- \frac{1}{2} \xi_{k,m}[n] {\hm{\phi}^{T}_{k,m}[n]} \hm{\phi}_{k,m}[n]}_{\text{concave}}, ~\forall k,m,n.
 	\end{align}
 	Please see \cite[Appendix~A]{rezaei2019throughput} for a lower bound on $\xi_{k,m}[n]$.
 	Then, retaining the concave component, we can bound the convex part to yield a convex constraint in the following manner:
 	\begin{align} \label{e6}
 		\scalemath{.95}{\bar{\textrm{C}}_{6}:}& \scalemath{.95}{\sum^{n}_{j=1} \delta_t \psi_{k,m} [j]  \leq   \widetilde{E}^{(\kappa-1)}_{k,m}[n]  +  \frac{1}{2} \xi_{k,m}[n] \left({\hm{\phi}^{(\kappa-1)}_{k,m}[n]} \right)^T} \\ \nonumber & \scalemath{.95}{\times \hm{\phi}^{(\kappa-1)}_{k,m}[n]  +  {\mathbf{u}}^T_{k,m}[n]   \left( {\hm{\phi}_{k,m}[n]} - {\hm{\phi}^{(\kappa-1)}_{k,m}[n]} \right)} \\ \nonumber & \scalemath{.95}{-\frac{1}{2} \xi_{k,m}[n] \left({\hm{\phi}_{k,m}[n]} \right)^T \hm{\phi}_{k,m}[n]+ E^0_{k,m}, ~ \forall k,m,n,}
 	\end{align}
 	where
 	\begin{align*}
 		{\mathbf{u}}_{k,m}[n]=\Bigg [& \frac{\partial E^c_{k,m}[n] }{\partial \phi_{k,m,1}[1]}, \frac{\partial E^c_{k,m}[n] }{\partial \phi_{k,m,1}[2]},...,\frac{\partial E^c_{k,m}[n] }{\partial \phi_{k,m,1}[n]}, \\ \nonumber & \frac{\partial E^c_{k,m}[n] }{\partial \phi_{k,m,2}[1]},...,\frac{\partial E^c_{k,m}[n] }{\partial \phi_{k,m,M}[n]} \Bigg]^T,
 \end{align*}
and
 \begin{align*}
 &\scalemath{.95}{\frac{\partial E^c_{k,m}[n] }{\partial \phi_{k,m,i}[j]}= \xi_{k,m}[n] {\phi}^{(\kappa-1)}_{k,m,i}[j]+}\\ \nonumber & \scalemath{.92}{\frac{\tau_0 \delta_t \iota_{k,m}\widetilde{\iota}_{k,m} \rho_0 p^{dl}_i[j]}{(1-t_{k,m})\left( 1+\textrm{exp} \left(-\widetilde{\iota}_{k,m} \left(  \widetilde{p}^{(\kappa-1)}_{k,m}[j]    - \bar{\iota}_{k,m} \right) \right) \right)^2 \left({\phi}^{(\kappa-1)}_{k,m,i}[j]\right)^2},}
 \end{align*}
where $E^c_{k,m}[n]$ denotes the convex part of the expression in \eqref{jj}. Now, through substituting the constraint $\bar{\textrm{C}}_{6}$ in \eqref{e6} with the corresponding constraint $\bar{\textrm{C}}_{6}$ of the problem in \eqref{maxmin6}, Algorithm 1 can be adjusted to accommodate the non-linear EH paradigm.
 \section{Numerical Examples} \label{num}	
In this section, we evaluate the effectiveness of the proposed method by numerical examples. The convex problem associated with the devised method is solved by CVX \cite{cvx}. Considering the complexity and multitude of parameters involved in our proposed method, this section is divided into three subsections with specific assumptions to facilitate better numerical comprehension. It's important to highlight that the non-linear EH constraint examined in Section~\rom{4} are solely addressed in Subsection~\ref{bbn}. Shared parameters across all subsections are outlined below, while unique parameters are determined within each subsection. We consider $\mathrm{\widetilde{SIR}}_{m,k}=\frac{\vert \alpha_{m} \vert^2}{\widetilde{\sigma}^2_{m,k}}=-10$~dB, $\forall m,k$, and $\mathrm{\widetilde{SNR}}_{m,l}=\frac{\vert \alpha_{m} \vert^2}{{\sigma}^2_{m,l}}=-10$ dB, $\forall m,l$
for radar receiver; $\widetilde{N}=350$, $\widetilde{\tau}_0= 700$ $\mu\hspace{1pt}$second, $lb=5$, $ub=10$, and $p_m^{dl}[n]=37$ dBm, $\forall m,n$ \cite{nguyen2022ris} for radar/WPT waveform; $\sigma_{c,m}^2=-134$ dBm, $\forall m$ \cite{nguyen2022ris} for communication receiver; $\nu_m=0.05$ radians (for implementing the FFT-based Doppler processing filter), $\forall m$  \cite{li2022joint}, $\zeta=30$ degrees, $v_{\textrm{max}}=20$ m/s \cite{li2022joint}, $d_{\textrm{min}}=5$\hspace{2pt}m, and $\delta_t=1$ second \cite{wei2022safeguarding} as UAV flight parameters; $\rho_0=-30$ dB \cite{li2022joint} for channel power gain; $\epsilon_{m,k}=0.5, \hspace{1pt} \forall m,k$ \cite{nguyen2022ris}, and $E^0_{m,k}=1$ mJ, $\forall m,k$ for EH circuit; and $\mathrm{CR}_m = \lbrace \mathrm{CR}^x_m= 300\hspace{1pt}\textrm{m} \times \mathrm{CR}^y_m=300\hspace{1pt}\textrm{m} \rbrace,$ $\forall m$ \cite{wei2022safeguarding} for cluster regions. Also, we assume $\mu=0.5$ unless otherwise specified. Moreover,
we define the normalized radius of user location uncertainty as
\begin{equation}
\widetilde{r}_{k,m}=\frac{\bar{d}_{k,m}}{\frac{\textrm{min}\hspace{2pt} (\mathrm{CR}^x_m, \mathrm{CR}^y_m)}{2}},~\forall k,m.
\end{equation}
\subsection{Simplified 2D flight scenario}\label{2dsimp} 
\begin{figure} 
	\centering
	\includegraphics[width=9.5cm,height=7cm]{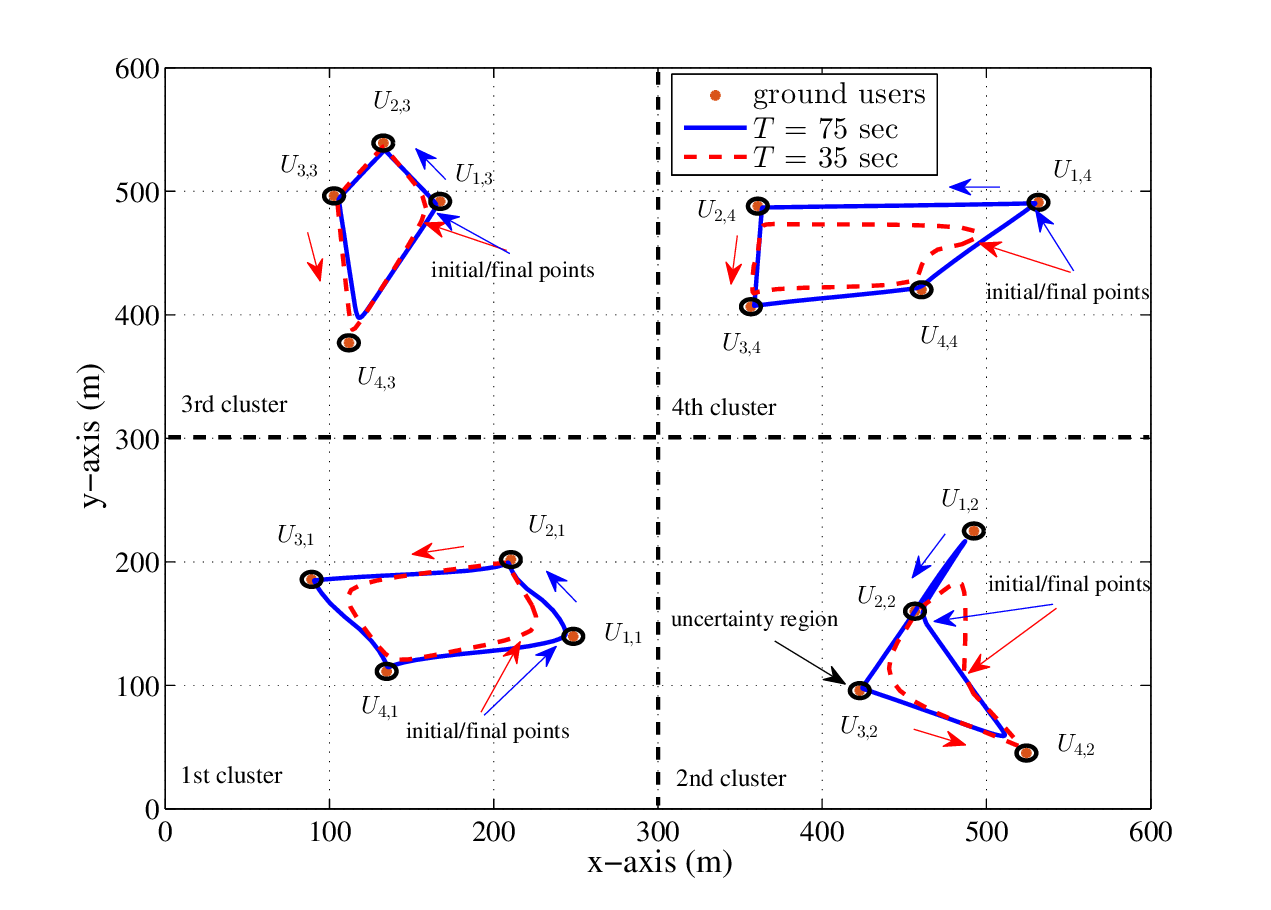}
	\caption{2D optimized UAV trajectories for different $T$.}
	\label{jhjh}
	\centering
\end{figure}
\begin{figure} 
	\centering
	\includegraphics[scale=.4]{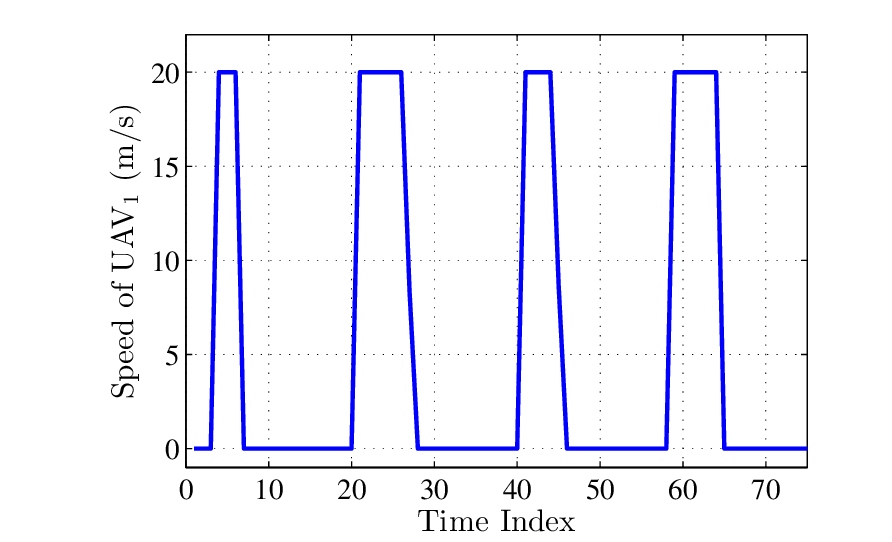}
	\caption{The speed of $\mathrm{UAV}_1$ for $T=75$ seconds.}
	\label{jhjh12}
	\centering
\end{figure}
First, we consider a 2D flight setup, i.e., $z_{\textrm{max}}=z_{\textrm{min}}=z^{\mathrm{tr}}_m= 100$ m, $\forall m$ \cite{li2022joint}, with $K=4$ for ground users which are located in $M=4$ clusters. Moreover, we set $N_{m}^{\mathrm{NFZ}}=0$, $\forall m$, and $\widetilde{r}_{k,m}=0.04$, $\forall k,m,$ in this subsection. 

Fig.~\ref{jhjh} illustrates the optimized UAV trajectories for different $T$. It can be seen that in order to increase the harvested energy during the sensing/WPT phase, the UAVs adjust their trajectory center to be close to the center of users for all cases. By increasing $T$, the UAVs try to move closer to each user for increasing the communication throughput during the uplink phase. Precisely, UAVs hover around their cluster users for the maximum possible duration to maintain the closest situation. 
For instance, the amount of hovering time for $\mathrm{UAV}_1$ can be seen from its speed diagram in Fig.~\ref{jhjh12} for the case of $T=75$ seconds, where we can observe that the speed of $\mathrm{UAV}_1$ reduces to zero when flies right above each user. 

Moreover, as an example for time scheduling, we illustrate the optimized fraction of the uplink time resource allocation of $\mathrm{UAV}_1$ for the case of $T=75$ seconds in Fig.~\ref{jhjh1}, where the optimal value of $L$ is equal to $512$ and therefore, $\tau_0=L \widetilde{\tau}_0=0.3584$ second is obtained for the sensing/WPT phase. We can also observe that in the optimized uplink time scheduling, only one user in each cluster (which is the closest to its associated UAV) is supported at each subslot.

In Fig.~\ref{jhjh2}, the Pareto curves along with the optimized value of $L$ are shown for different values of Pareto weight $\mu$ assuming $T=35$ seconds. It is observed that by increasing $\mu$ till $0.85$, the minimum sensing SINR is increasing; and minimum communication throughput is decreasing. This is due to the fact that more attention is given to the sensing SINR which is confirmed by looking at larger values for optimal $L$. For $\mu=0.85$, $L$ reaches its upper bound $1024$ and so, larger $\mu$ does not change time scheduling and Pareto curves. Note that since the maximum power budget
for uplink transmission are determined by the amount of harvested
energy (see \eqref{key}), the performance of recharging procedure directly affects
the communication throughput. Indeed, the throughput values indicate the performance of both WPT and WIT phases.

As a final note, to guide the optimal selection of the parameter $\mu$, it is important to recognize its direct relationship with the parameter $L$. Specifically, increasing $\mu$ from values close to 0 toward 1 results in $L$ increasing from its lower bound $2^{lb}$ to its upper bound $2^{ub}$. As discussed in Subsection~\ref{mmnh}, $L$ plays a critical role in balancing the trade-off between sensing and communication. From the radar perspective, $L$ influences multiple design aspects. For example, a higher value of $L$ allows for an increased minimum required sensing SINR at the radar receiver, which improves target detection performance. However, excessively large values of $L$ may lead to increased complexity in the radar processing chain (see Fig.~\ref{hh1}). From the communication perspective, while increasing $L$ can enhance WPT performance, it may simultaneously degrade WIT efficiency.
\begin{figure} 
	\centering
	\includegraphics[scale=.4]{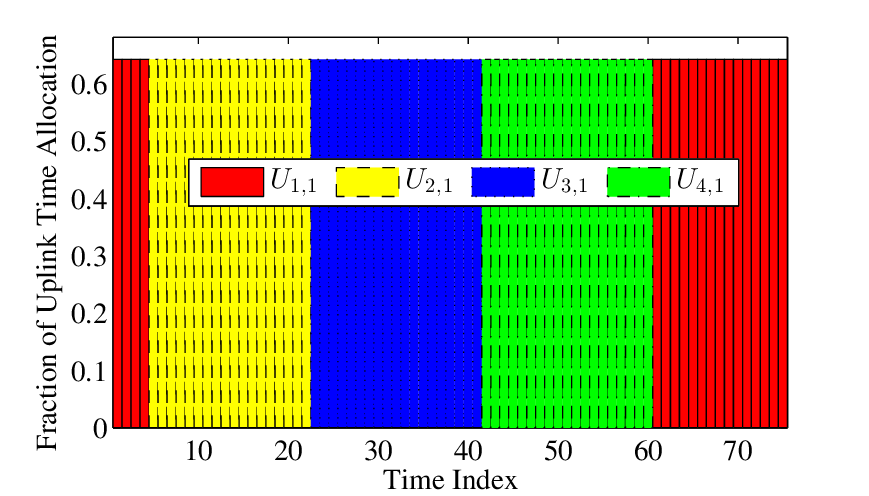}
	\caption{Optimized time resource allocation of $\mathrm{UAV}_1$ for $T=75$ seconds.}
	\label{jhjh1}
	\centering
\end{figure}
\begin{figure} 
	\centering
	\includegraphics[scale=.45]{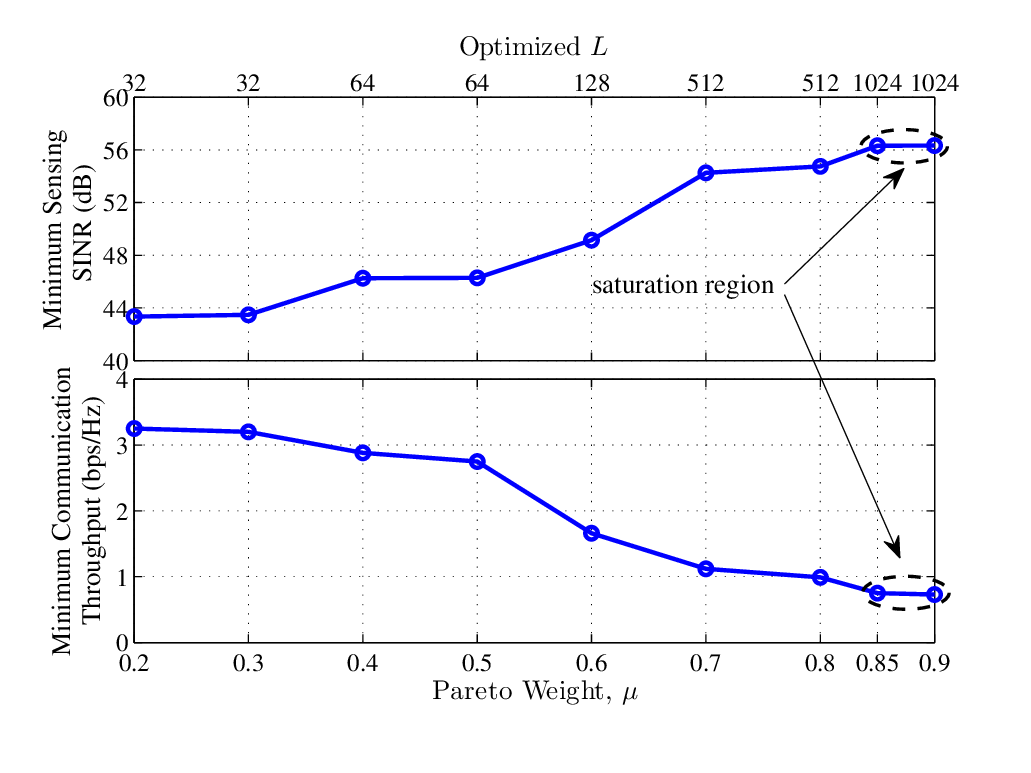}
	\caption{Pareto-optimized curves for $T=35$ seconds.}
	\label{jhjh2}
	\centering
\end{figure}
\subsection{Real-world 2D flight scenario}\label{bbn}
In this subsection, we aim to explore a 2D flight scenario within the constraints of real-world conditions. Specifically, we take into account the non-linearities inherent in EH circuits as detailed in Section~\ref{nonl}. Building upon the parameter settings outlined in Subsection~\ref{2dsimp}, we adjust the values of $M$ and $K$ to $M=1$ and $K=6$, respectively. Furthermore, we set $\iota_k=24$ mW, $\widetilde{\iota}_k=1500$ and $\bar{\iota}_k=22\times 10^{-4}, \hspace{1pt} \forall k$ \cite{boshkovska2017robust}.

Additionally, we present two baseline schemes for performance evaluation. The first baseline approach entails partial optimization, where all design variables are optimized except for the trajectory of UAVs. Specifically, UAVs follow a straight trajectory with a constant speed from their initial locations $\mathbf{q}_m [0]$ to their final locations $\mathbf{q}_m [N]$, calculated as ${\parallel \mathbf{q}_m [N]- \mathbf{q}_m [0] \parallel}_2 /T$ \cite{meng2022throughput}. Similarly, the second baseline approach involves partial optimization, where UAVs initially fly straight from their starting locations to an optimized point at maximum speed ${v}_{\textrm{max}}$. After hovering at this position, UAVs proceed to their final destinations in a straight path at maximum speed ${v}_{\textrm{max}}$ \cite{meng2022throughput}. It is worth noting that both baseline methods are special cases of the proposed approach discussed in Section~\ref{sum}. Furthermore, for a more comprehensive comparison with the baseline methods, the constraint $ \mathbf{q}_m [0] = \mathbf{q}_m[N] ,\hspace{2pt} \forall m,$ originally defined in constraint set $\textrm{C}_{2}$ of the design problem presented in \eqref{maxmin1}, is adjusted to $\mathbf{q}_m [0]= \mathbf{q}^I_m,~ \mathbf{q}_m[N]= \mathbf{q}^F_m $ within this subsection. Here, $\mathbf{q}^I$ and $\mathbf{q}^F$ denote the predefined initial and final locations of $\mathrm{UAV}_m$, respectively.

Fig.~\ref{jhjh100} presents the optimized UAV trajectories for both the proposed method and the baseline methods. As discussed in Subsection~\ref{2dsimp}, reducing the duration $T$ results in UAVs being unable to reach each user precisely and hover at their locations for the designated time. Increasing the parameter $K$ further highlights this observation in the optimized UAV trajectories under real-world conditions. Furthermore, in the case of the second baseline method, the optimal hovering point is determined to not only fulfill sensing requirements but also enhance the minimum communication throughput of users to the greatest extent possible. Additionally, Fig.~\ref{pow} depicts the impact of downlink transmission power $p_m^{dl}[n]$ on the objective function of the problem in \eqref{maxmin1}. It is evident that the proposed scheme exhibits superior performance compared to the baseline methods.
\begin{figure} 
	\centering
	\includegraphics[scale=.46]{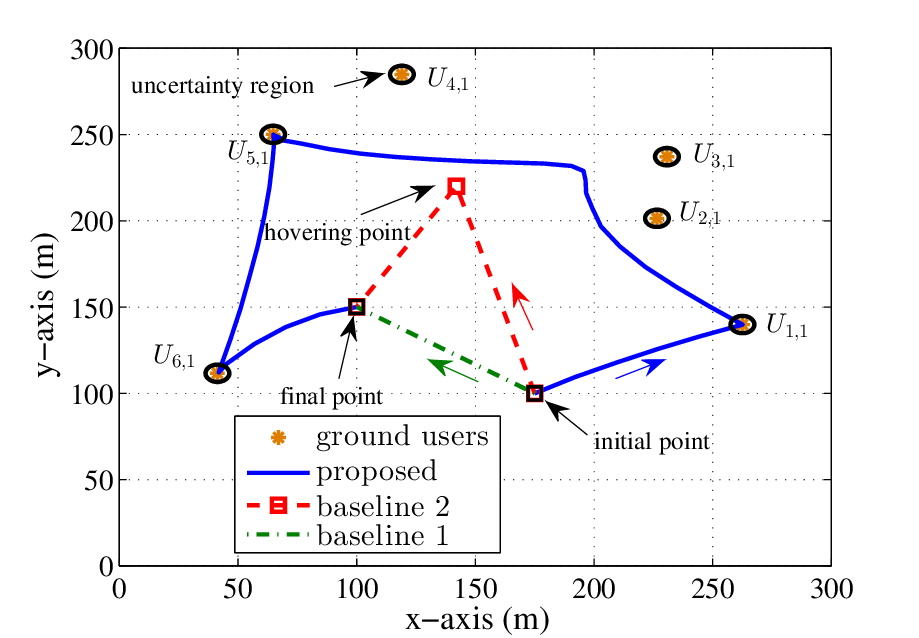}
	\caption{2D optimized UAV trajectory under real-world constraints for $T=50$ seconds.}
	\label{jhjh100}
	\centering
\end{figure}
\begin{figure} 
	\centering
	\includegraphics[scale=.46]{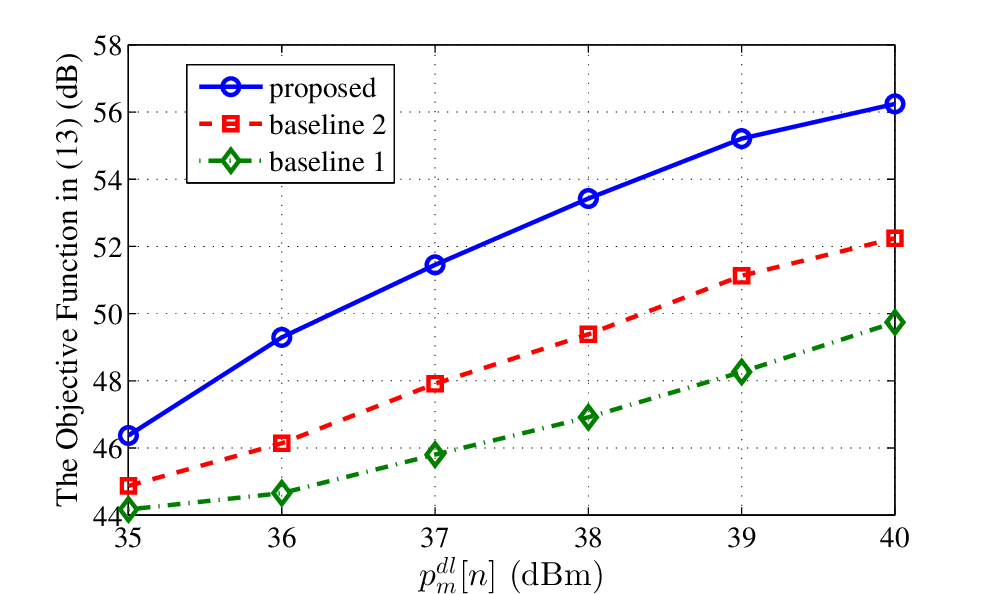}
	\caption{The values of the objective function in \eqref{maxmin1} versus the downlink transmission power $p^{dl}_m [n]$ for $T=50$ seconds.}
	\label{pow}
	\centering
\end{figure}

Next, we compare the proposed method, which incorporates a non-linear EH model, with benchmark schemes that assume a linear EH model, as considered in \cite{10382465, 10539920}. These linear EH-based methods are treated as a third baseline in this section. Fig.~\ref{pow12} illustrates the performance comparison between the proposed scheme and this baseline. Since the amount of harvested energy directly determines the uplink power budget and consequently impacts the communication throughput of users, we adopt the minimum communication throughput as the comparison metric. As described in \cite{rezaei2019throughput,rezaei2024cooperative}, the input-output power characteristic of the non-linear EH model used in Section~\ref{nonl} consists of three distinct regions: the linear region, a transient region, and a saturation region. As observed in Fig.~\ref{pow12}, when the downlink transmit power $p_m^{dl}[n]$ remains within the linear region of the EH circuit, the throughput performance of the proposed non-linear model is comparable to that of the linear baseline. However, as $p_m^{dl}[n]$ increases and the system enters the transition and saturation regions, the performance of the linear EH-based methods surpasses that of the non-linear model, due to reduced energy conversion efficiency. Therefore, Fig.~\ref{pow12} helps to identify the optimal operating region for the downlink transmit power $p_m^{dl}[n]$, which should ideally lie within the linear regime of the non-linear EH circuit to ensure favorable throughput performance.
\begin{figure} 
	\centering
	\includegraphics[scale=.46]{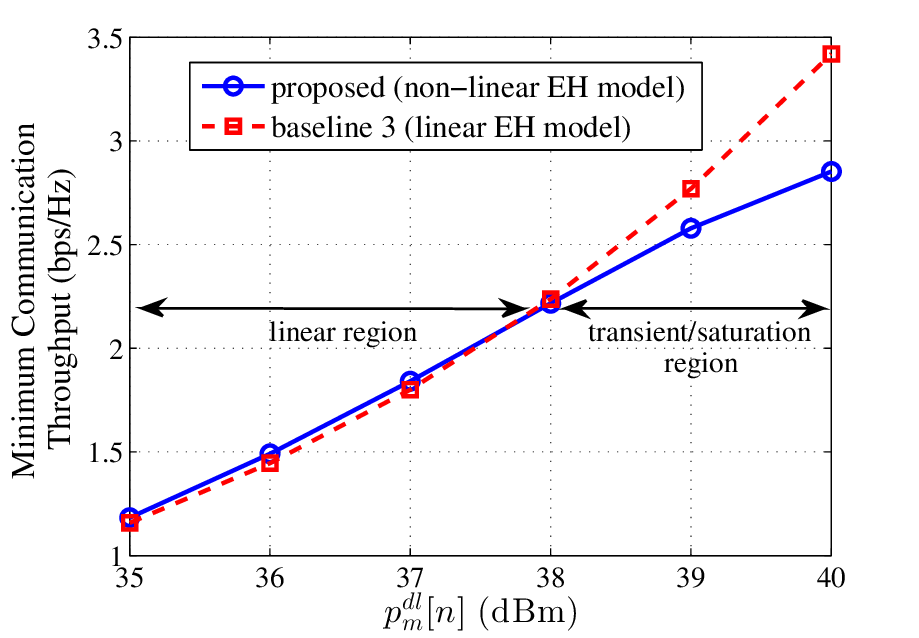}
	\caption{Minimum communication throughput versus downlink transmission power $p^{dl}_m [n]$ for non-linear and linear EH models.}
	\label{pow12}
	\centering
\end{figure}
\subsection{3D flight scenario}
In this subsection, we study the general 3D flight mode with $z_{\textrm{max}}=150$ m, $z_{\textrm{min}}= 50$ m, $K=5$, $M=2$, $N_1^{\mathrm{NFZ}}=1$, $N_2^{\mathrm{NFZ}}=2$,  $r_{m,j}^{\mathrm{NFZ}}=10$ m, $\forall m,j$, $z_m^{\mathrm{tr}}= 85$ m, $\forall m$, and $\widetilde{r}_{k,m}=0.03$, $\forall k,m$.

Fig.~\ref{jhjh125} shows the UAV trajectories in a 3D scenario. The ability of the proposed method to avoid collision with obstacles in the NFZ can be seen from this figure.

In Fig.~\ref{jhjh20}, we investigate the impact of user location uncertainty on the communication throughput by comparing the proposed robust scheme with a non-robust baseline, referred to as the fourth baseline method in this section. This baseline follows the approaches of works such as \cite{948449611, 808617712}, where the user locations are assumed to be perfectly known. More specifically, the robust scheme considers location uncertainties during the resource allocation stage, whereas the fourth baseline scheme operates under the assumption that user locations are precise and free from uncertainties, neglecting to consider location uncertainties during the design phase. As expected, by increasing the normalized radius of uncertainty, i.e., $\widetilde{r}_{k,m}$, the minimum communication throughput decreases. Also, it can be observed that the performance gain of the robust method over the non-robust one is significant for higher values of $\widetilde{r}_{k,m}$. This is because the non-robust scheme lacks readiness for managing uncertainties during the design stage, while the robust method acknowledges and attempts to mitigate uncertainties to the best of its ability.
\begin{figure}
	\centering
	\subfigure[]{\includegraphics[width=9.25cm,height=5.cm]{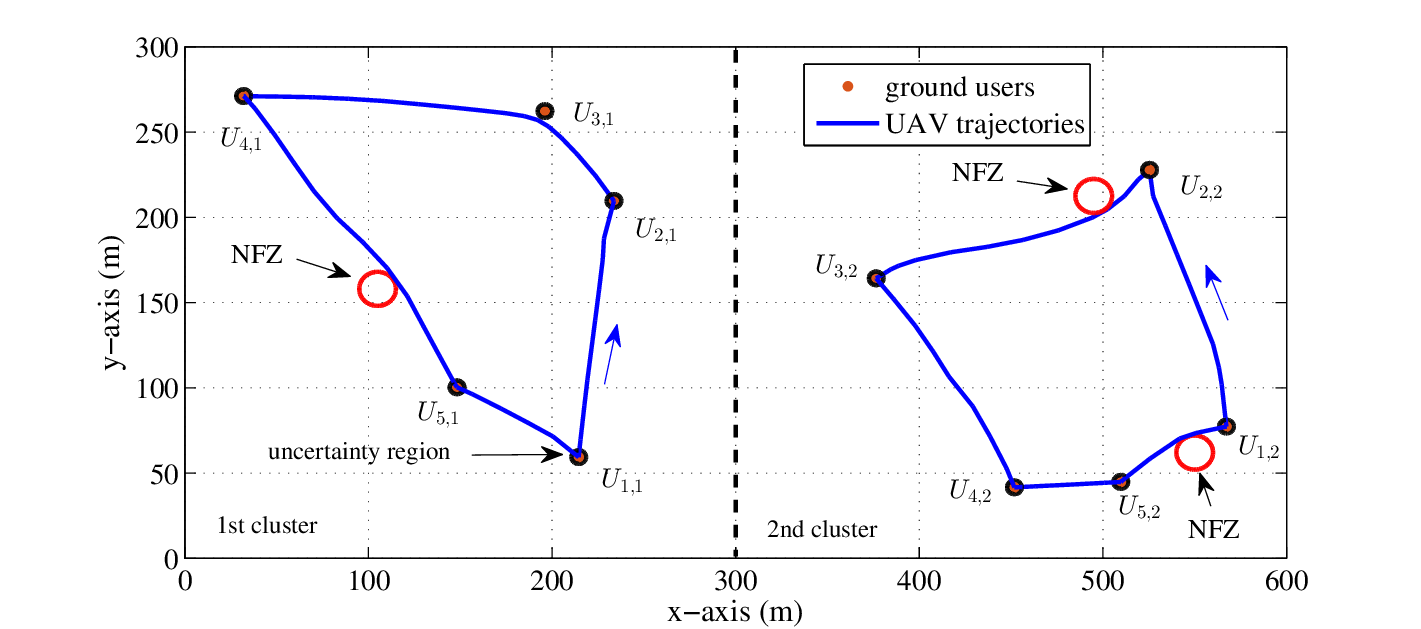}}
	\subfigure[]{\includegraphics[width=8.65cm,height=5.cm]{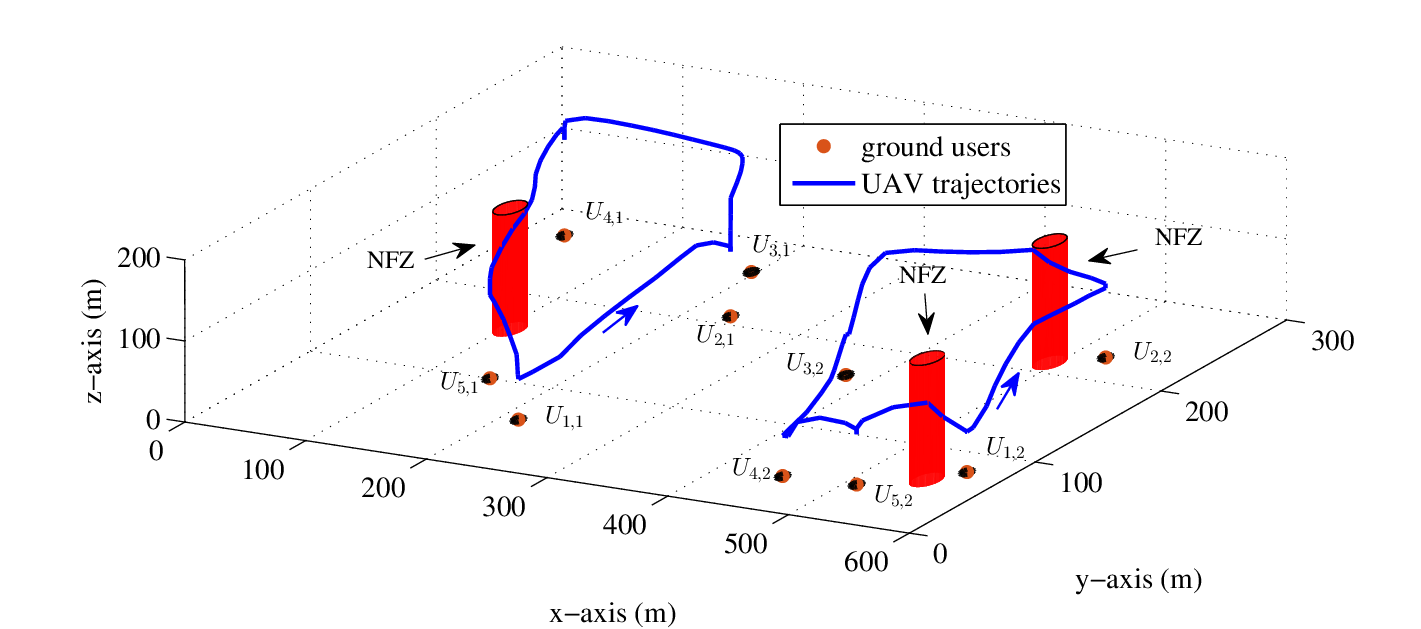}}
	\caption{3D optimized UAV trajectories for $T=60$ seconds: (a)~bird-view,~(b)~full-view.}
	\label{jhjh125}
\end{figure}	
\begin{figure} 
	\centering
	\includegraphics[scale=.46]{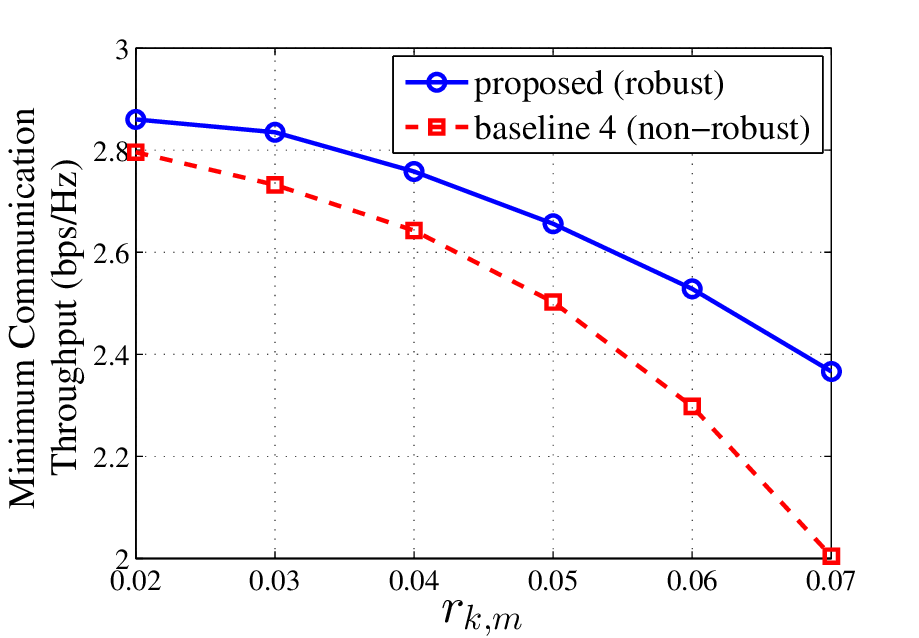}
	\caption{Minimum communication throughput versus the normalized user location uncertainty for $T=40$ seconds.}
	\label{jhjh20}
	\centering
\end{figure}

Finally, to highlight the importance of joint sensing and WPT waveform design, we introduce the fifth baseline method that employs a collaborative MRT-based waveform during the downlink phase \cite{moorthy2023swarm, 10574401}. Specifically, each single-antenna UAV transmits an energy-carrying signal with a phase aligned to the conjugate of its individual channel toward the designated user. This strategy allows all UAVs to cooperatively form a distributed MRT beam, thereby maximizing the coherent sum of received power at the user side. Table~\ref{XXXX} shows that, while collaborative MRT improves the harvested energy, it fails to satisfy the sensing SINR constraints—underscoring the effectiveness of the ISAC-optimized waveform.
\begin{table}[htbp]
	\centering
	\caption{Comparison of Proposed and Baseline 5 schemes.}\label{XXXX}
	\renewcommand\cellalign{cc}
	\renewcommand\theadalign{cc}
	\begin{tabular}{|c|c|c|c|c|}
		\hline
		\multirow{2}{*}{$p_m^{dl}[n]$} & \multicolumn{2}{c|}{\makecell[c]{\textbf{Minimum Communication}\\\textbf{Throughput (bps/Hz)}}} & \multicolumn{2}{c|}{\textbf{Sensing SINR (dB)}} \\
		\cline{2-5}
		& Proposed & Baseline 5 & Proposed & Baseline 5 \\
		\hline
		36 dBm & 1.53 & 1.68 & 48.31 & 43.22 \\
		\hline
		38 dBm & 2.48 & 2.71 & 52.13 & 48.03 \\
		\hline
		40 dBm & 3.10 & 3.18 & 55.75 & 51.12 \\
		\hline
	\end{tabular}
\end{table}
\section{Conclusion} \label{con}
In this paper, we proposed a multi-UAV enabled integrated sensing and wireless powered communication framework where a dual use of radar/WPT waveforms enables the UAVs to efficiently detect targets and serve a group of energy-limited ground users. We designed the radar receive filters, radar/WPT waveforms, uplink power along with time scheduling of ground users, and UAV trajectories to maximize a joint radar and communication performance metric under user location uncertainty. Through simulations, we demonstrated that the proposed method improves the joint performance of the radar and wireless powered communication, while choosing a suitable value for the total length of radar/WPT sequences. The proposed model in this paper can be extended to a class of ISAC optimization problems with other performance metrics under some general constraints, some of which are discussed as follows for
future work.
\begin{itemize}
	\item One can consider the mutual information as the radar performance metric instead of minimum SINR in the cost function of the proposed multi-objective design problem. 
	\item 
	Some practical concepts such as UAV jittering and UPA of antennas for all sensing/communication transmitters and receivers can be taken into account in the design problem. 
\end{itemize}
\appendices 
\section{Proof of Proposition~\ref{cor1}} \label{app11}
Suppose that $f_2(\widetilde{\mathbf{x}}_m)= g_1(\widetilde{\mathbf{x}}_m) - f_1(\widetilde{\mathbf{x}}_m^{(0)}) g_2(\widetilde{\mathbf{x}}_m),\hspace{2pt}\forall m$ where $\widetilde{\mathbf{x}}_m^{(0)}$ indicates the current value of $\widetilde{\mathbf{x}}_m$. Now, let us define $\widetilde{\mathbf{x}}_m^{\star}= \textrm{arg} \max_{\widetilde{\mathbf{x}}_m} f_2(\widetilde{\mathbf{x}}_m), \hspace{2pt}\forall m$. It is observed that $f_2(\widetilde{\mathbf{x}}^{\star}_m) \geq f_2(\widetilde{\mathbf{x}}^{(0)}_m)=0, \hspace{2pt}\forall m$. As a result, since $g_2(\widetilde{\mathbf{x}}_m) >0$, $f_2(\widetilde{\mathbf{x}}^{\star}_m)= g_1(\widetilde{\mathbf{x}}^{\star}_m) - f_1(\widetilde{\mathbf{x}}_m^{(0)}) g_2(\widetilde{\mathbf{x}}^{\star}_m) \geq 0,\hspace{2pt}\forall m$ leads to $f_1(\widetilde{\mathbf{x}}^{\star}_m) \geq f_1(\widetilde{\mathbf{x}}^{(0)}_m),\hspace{2pt}\forall m $, and therefore, $q(\widetilde{\mathbf{X}}^{\star}) \geq q(\widetilde{\mathbf{X}}^{(0)})$, where $\widetilde{\mathbf{X}}^{\star}=\lbrace \widetilde{\mathbf{x}}^{\star}_m ,\hspace{2pt} \forall m \rbrace$ and $\widetilde{\mathbf{X}}^{(0)}=\lbrace \widetilde{\mathbf{x}}^{(0)}_m ,\hspace{2pt} \forall m \rbrace$. Consequently, $\widetilde{\mathbf{X}}^{\star}$ can be a new matrix $\widetilde{\mathbf{X}}$
which increases $q(\widetilde{\mathbf{X}})$.
\section{Proof of Proposition~\ref{cor2} and Proposition~\ref{cor3}} \label{app12}
For the case of Proposition~\ref{cor2}, we need to prove the following expressions:  
\begin{itemize}
	\item at the optimal solution of the problem in \eqref{maxmin4}, $\textrm{C}_8$ is active;
	\item the optimal value of the problem in \eqref{maxmin3}, denoted by $\theta_{\ref{maxmin3}}$, is always greater than the optimal value of the problem in \eqref{maxmin4}, denoted by $\theta_{\ref{maxmin4}}$, and the equality holds when $\textrm{C}_8$ is active.
\end{itemize}
Let us proceed by contradiction to prove the first item. To this end, let us assume
that at the optimal point of the problem in \eqref{maxmin4}, the equality in $\textrm{C}_8$ does not hold. In this case, the value of $\theta$ can be increased by reducing $\phi_{k,m,i}[n]$ in $\textrm{C}_8$, which is evidently in contradiction with the assumption of $\textrm{C}_8$ is not active.

As to the second item, considering $\textrm{C}_7$ in \eqref{maxmin3} and $\widetilde{\textrm{C}}_8$ as well as $\textrm{C}_7$ in \eqref{maxmin4}, straightforwardly leads to the fact that $\theta_{\ref{maxmin3}} \geq \theta_{\ref{maxmin4}}$ and the equality holds holds when $\textrm{C}_8$ is active. 

The Proposition~\ref{cor3} can be similarly proved.
\bibliographystyle{IEEETran}
\bibliography{myreff}
\end{document}